\pgfplotsset{compat=newest}
\DeclareSymbolFontAlphabet{\mathbb}{AMSb}
\DeclareSymbolFontAlphabet{\mathbbl}{bbold}
\DeclareMathOperator{\diag}{diag}
\DeclareMathOperator{\rank}{rank}
\DeclareMathOperator{\blkdiag}{blkdiag}
\DeclareMathOperator{\diff}{d}
\DeclareMathOperator*{\argmin}{argmin}
\DeclareMathOperator{\Ima}{Im}
\DeclareMathOperator{\vspan}{span}
\newcommand{\mc}{\mathcal}
\newcommand{\ddt}{\tfrac{\diff}{\diff \!t}}
\newcommand{\norm}[1]{\left \lVert #1 \right \rVert}
\xpatchcmd{\@thm}{\thm@headpunct{.}}{\thm@headpunct{}}{}{}
\newcommand\blue[1]{{\color{blue}#1}}
\definecolor{lightgray}{gray}{0.9}
\newtheorem{theorem}{Theorem}
\newtheorem{lemma}{Lemma}
\newtheorem{proposition}{Proposition}
\newtheorem{assumption}{Assumption}
\newtheorem{remark}{Remark}
\newtheorem{definition}{Definition}
\newtheorem{condition}{Condition}
\newtheorem{corollary}{Corollary}
\newtheorem{property}{Property}
\title{\LARGE \bf  Networked dynamics with application to frequency stability of grid-forming power-limiting droop control}
\author{Amirhossein Iraniparast  and Dominic Gro\ss{} \thanks{This work was supported in part by the National Science Foundation under Grant No. 2143188. A. Iraniparast and D. Gro\ss{} are with the Department of Electrical and Computer Engineering at the University of Wisconsin-Madison, USA; e-mail: iraniparast@wisc.edu, dominic.gross@wisc.edu}}
\begin{document}
\maketitle

\begin{abstract}
    In this paper, we study a constrained network flow problem and associated networked dynamics that resemble but are distinct from the well-known primal-dual dynamics of the constrained flow problem. Crucially, under a change of coordinates, the networked dynamics coincide with primal-dual dynamics associated with the constrained flow problem in edge coordinates. Next, we show that, under mild feasibility assumptions, the networked dynamics are globally asymptotically stable with respect to the set of optimizers of its associated constrained flow problem in nodal coordinates. Subsequently, we apply our stability results to establish frequency stability of power-limiting grid-forming droop control. Compared to conventional grid-forming droop control, power-limiting droop control explicitly accounts for active power limits of the generation (e.g., renewables) interfaced by the converter. While power-limiting droop control has been demonstrated to work well in simulation and experiment, analytical results are not readily available.  
    Moreover, we (i) show that the converter frequencies synchronize to a common synchronous frequency for each grid-forming converter, (ii) characterize the synchronous frequency in the case of converters operating at their power limit, and (iii) establish that power-limiting droop control exhibits power-sharing properties similar to conventional unconstrained droop control. Finally, the analytical results are illustrated using an Electromagnetic transient (EMT) simulation.
\end{abstract}
\begin{keywords}
Grid-forming control, frequency synchronization, power limiting
\end{keywords}
\section{Introduction}
The transition from bulk power generation towards decarbonization and renewable energy sources results in significantly different power system dynamics. In particular, renewable generation is interfaced through power electronic devices that significantly differ from conventional synchronous generators in terms of their response time, device constraints (e.g., power and current limits), and dynamic interactions through the grid. As a result, introducing renewable generation into large-scale power system challenges standard operating and control paradigms and jeopardizes system stability~\cite{KJZ+2017,MDHHV2018}. For instance, the constraints of power converters and renewable generation sources such as power limits need to be considered in the stability analysis of emerging power systems.

Continuous-time primal-dual gradient descent dynamics~\cite{ashish} have been used to analyze existing system-level control algorithms such as automatic generation control in multi-machine power systems~\cite{LZC2016} and design distributed power flow control~\cite{CDA2020}. These works focus on equality constrained optimization problems that arise from secondary and tertiary control of power systems. In contrast, applying primal-dual dynamics to inequality constrained control of converters results in dynamics that (i) cannot be implemented using only local information, and (ii) do not match existing controls such as power limiting droop control~\cite{DLK2019}. While power limiting droop control closely resembles primal-dual dynamics of a constrained power flow problem, connections between the two have not been investigated. This work investigates a constrained network flow problem and associated networked dynamics that can be implemented using only local information. We show that, under a suitable change of coordinates, the networked dynamics coincide with the well-known primal-dual dynamics of the constrained flow problem. Moreover, in their original coordinates, the networked dynamics coincide with the frequency dynamics of a power system containing converters using grid-forming power-limiting droop control.

Today, most renewables are interfaced by dc/ac voltage source converters using so-called grid-following control that require a stable and slowly changing ac voltage (i.e., magnitude and frequency) and jeopardize grid stability when disturbances occur~\cite{KH2024}. However, because grid-following control explicitly controls the converter current, incorporating power limits is straightforward. In contrast, grid-forming converters impose stable and self-synchronizing ac voltage dynamics at their grid terminals and are commonly envisioned to replace synchronous machines as the cornerstone of future power systems. Prevalent grid-forming controls include droop control~\cite{CDA1993}, virtual synchronous machine control (VSM)~\cite{DSF2015}, and dispatchable virtual oscillator control (dVOC) ~\cite{GCBD2019}. However, device constraints remain a significant concern and existing stability results~\cite{DB2012, SOARS2014, SGRS2013, DS2014,SGCD2021} do not account for constraints. While the majority of theoretical works on grid-forming controls neglect constraints, current limiting in grid-forming controls has received significant attention in the application oriented literature~\cite{SCG2010, PD2015, FLZ2022, BCL+2024} and only few works investigate dc voltage limits~\cite{CPLJ2016} and power limits~\cite{DLK2019}. Power-limiting droop control  combines conventional droop control with proportional-integral controls that activate when the converter reaches its power limit~\cite{DLK2019}. While power-limiting droop control has been demonstrated to work well in practice~\cite{DLK2019}, to the best of the authors' knowledge, no analytical stability conditions or theoretical results are available in the literature.

To this end, we first formulate a generic constrained network flow problem and associated projected networked nodal dynamics that, in contrast to primal-dual dynamics of the constrained flow problem, can be implemented using only local flow measurements. Our main contribution is to show that Carath\'eodory solutions of the projected networked dynamics are asymptotically stable with respect to Karush-Kuhn-Tucker (KKT) points of the constrained flow problem. To obtain this result, we show that the networked dynamics correspond to primal-dual dynamics of the constrained network flow problem in so-called edge coordinates~\cite{ZM2011}. Moreover, we characterize the active constraint set and synchronization of nodal dynamics.

To apply the results in the context of power limiting droop control, we first reformulate the frequency dynamics of a multi-converter power system with converters using grid-forming power-limiting droop control as a projected dynamical system. We then apply our stability result to establish that the multi-converter power system is asymptotically stable with respect to the set of KKT points of a constrained optimal dc power flow problem. Moreover, we establish that the grid-forming converters synchronize to a common synchronous frequency. Next, we formally characterize the relationship between the overall load and the active constraint set and synchronous frequency upon convergence of the power system to a KKT point of the constrained power flow problem.

Moreover, we establish that, upon  convergence, the converters exhibit properties similar to so-called power-sharing in  unconstrained droop control~\cite{SDB2013}. Specifically, power-limiting droop control results in power-sharing up to the power limit, i.e., converters share the additional load according to their droop coefficient until reaching their power limit. This result also establishes that (i) power-limiting droop control cannot converge to operating points at which some converters are at their upper power limit while other converters are at their lower power limit, and that (ii) the synchronous frequency is a function of the overall load, droop coefficients of converters that are not at their power limit, and power limits of converters that are operating at their power limit. Overall, these results are important from a practical point of view to establish that power-limiting droop control does not converge to counter-intuitive operating points that are not aligned with assumptions of higher-level power system controls and operation. Finally, the reduced-order power system model and analytical results are validated and illustrated using an Electromagnetic transient (EMT) simulation of the IEEE 9-bus system.

This paper is organized as follows. Sec.~\ref{sec:model} introduces the network model, control objectives, and networked dynamics. Next, Sec.~\ref{sec:stabanalysis} defines a constrained optimization problem in edge coordinates to establish stability of the networked dynamics. Sec.~\ref{sec:application} applies the results to establish frequency stability of multi-converter power systems using grid-forming power-limiting droop control. A numerical case study to validate the main results is provided in Sec.~\ref{sec:numerical}. Finally, Sec.~\ref{sec:conclusion} provides conclusions and topics for future work.

\subsection*{Notation}
We use $\mathbb{R}$ and $\mathbb N$ to denote the set of real and natural numbers and define, e.g., $\mathbb{R}_{\geq 0}\coloneqq \{x \in \mathbb R \vert x \geq 0\}$. Moreover, we use $\mathbb{S}_{\succ 0}^n$ and $\mathbb{S}_{\succeq 0}^n$ to denote the set of real positive definite and positive semidefinite matrices. For column vectors $x\in\mathbb{R}^n$ and $y\in\mathbb{R}^m$ we define $(x,y) = [x^\mathsf{T}, y^\mathsf{T}]^\mathsf{T} \in \mathbb{R}^{n+m}$. Furthermore, $I_n$, $\mathbbl{0}_{n\times m}$, $\mathbbl{0}_{n}$, and $\mathbbl{1}_n$ denote the $n$-dimensional identity matrix, $n \times m$ zero matrix, and column vectors of zeros and ones of length $n$ respectively.  Moreover, we use $\norm{x}_{Q}=\sqrt{x^\mathsf{T} Q x}$ to denote the weighted Euclidean norm and $\norm{x}_{\mc C} \coloneqq \min_{z\in\mc C} \norm{z-x}$ denotes the point-to-set distance from $x$ to $\mc C$. The cardinality of a discrete set $\mc X$ is denoted by $|\mc X|$. The Kronecker product is denoted by $\otimes$. We use $\varphi_x(t,x_0)$ to denote a (Carath\'eodory) solution of $\ddt x = f(x)$ at time $t \in \mathbb{R}_{\geq 0}$ starting from $x_0$ at time $t=0$.

\section{Network model and control}\label{sec:model}
    In this section, we introduce the network model and control that will be considered throughout the paper.
    \subsection{Network and model}
    Consider a network modeled by a simple, connected and undirected graph $\mathcal{G}\coloneqq \{\mathcal{N}, \mathcal{E}, \mathcal{W}\}$ with set $\mathcal{E}\coloneqq\mathcal{N}\times\mathcal{N}$ corresponding to $|\mathcal{E}|=e$ edges, set $\mathcal{N}$ corresponding to $|\mathcal{N}|=n$ nodes, and set of edge weights $\mathcal{W}=\left\{w_1, \ldots, w_e\right\}$ with $w_i \in \mathbb{R}_{>0}$ for all $i \in \{1,\ldots,e\}$. Each node $i \in \mathcal{N}$ is associated with a local state variable $\theta_i \in \mathbb{R}$, constant unknown disturbance input $P_{L,i} \in \mathbb{R}$, and network injections $P_i \in \mathbb{R}$ that corresponds to interactions of the nodes (e.g., power flow in a power system). Considering the aforementioned definitions, the coupling through the network is modeled by
    \begin{align}\label{eq:dcpfeq}
        P \coloneqq L\theta + P_{L}, 
    \end{align}
    where $L\coloneqq BWB^\mathsf{T}$ is the Laplacian matrix of the undirected graph $\mc G$, $B \in {-1,0,1}^{n \times e}$ denotes the oriented incidence matrix of $\mc G$~\cite{LNS}, and $W=\diag\{w_i\}_{i=1}^{e}$. Moreover, $\theta = \left(\theta_1, \ldots, \theta_n\right) \in \mathbb{R}^n$ is the vector of nodal state variables, $P_{L} \coloneqq \left(P_{L,1}, \ldots, P_{L, n}\right) \in \mathbb{R}^n$ is the vector of unknown and constant disturbances, and $P = \left(P_1, \ldots, P_n\right) \in \mathbb{R}^n$ is the vector of network injections (i.e., interaction variables).

    \subsection{Objective and preliminary results}
    Our objective is to find nodal variables $\theta \in \mathbb{R}^n$ using only local information (i.e., the network injections $P_i$ and local states) that minimize the constrained flow problem (CFP)
    \begin{subequations}\label{eq:deCFP}
        \begin{align}
            &\min_{\theta, P} \quad \tfrac{1}{2} \|P-P^\star\|^2_{M} \label{eq:deCFP:dcpf:obj}\\
            & \text {s.t.} \quad  P_\ell \leq P \leq P_{u} \label{eq:deCFP:dcpf:lim}\\
            & \phantom{\text{s.t.}} \quad P = L \theta + P_{L} \label{eq:deCFP:dcpf}
            \end{align}
    \end{subequations}
    where $M \in \diag\{m_i\}_{i=1}^n \in \mathbb{S}^n_{\succ 0}$ is a diagonal matrix of weights, $P^\star \in \mathbb{R}^n$ is a vector of local references, and $P _{\ell} \in \mathbb{R}^n$ and $P_{u} \in \mathbb{R}^n$ model limits on the network injections. The following assumptions ensure feasibility of \eqref{eq:deCFP}.
\begin{assumption}[\textbf{Feasible injection limits and disturbances}]\label{assum:feas}
For all $i \in \mathcal{N}$, the limits $P_{\ell,i} \in \mathbb{R}^n$ and $P_{u,i} \in \mathbb{R}^n$ satisfy $P_{\ell,i} < P_{u,i}$. Moreover, the disturbance input $P_L \in \mathbb{R}^n$ satisfies $\sum_{i=1}^n P_{\ell,i} < \sum_{i=1}^n  P_{L,i} < \sum_{i=1}^n P_{u,i}$.
\end{assumption}
\begin{assumption}[\textbf{Feasible references}]\label{assum:setpoint}
    The setpoints $P^\star_i \in \mathbb{R}^n$ satisfy $P_{\ell,i} < P^\star_i < P_{u,i}$.
\end{assumption}
We require the following preliminary result that ensures feasibility of the CFP \eqref{eq:deCFP} under Assumption~\ref{assum:feas}.
\begin{proposition}[\textbf{Feasibility in nodal coordinates}]\label{prop:feas}
    There exists $\theta \in \mathbb{R}^n$ such that $P_{\ell} < L\theta + P_{L} < P_{u}$ if and only if $P_\ell$, $P_u$, and $P_L$ satisfy Assumption~\ref{assum:feas}.
\end{proposition}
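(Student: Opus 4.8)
The plan is to prove both directions of the equivalence. The "only if" direction is the easy one: if there exists $\theta$ with $P_\ell < L\theta + P_L < P_u$, then summing the components of the strict inequality and using $\mathbbl{1}_n^\mathsf{T} L = \mathbbl{0}_n^\mathsf{T}$ (since $L = BWB^\mathsf{T}$ and $\mathbbl{1}_n^\mathsf{T} B = \mathbbl{0}_e^\mathsf{T}$ for the oriented incidence matrix of a connected graph) yields $\sum_i P_{\ell,i} < \mathbbl{1}_n^\mathsf{T} P_L < \sum_i P_{u,i}$; together with the componentwise requirement $P_{\ell,i} < P_{u,i}$, which is implied by the existence of a point strictly between them, this is exactly Assumption~\ref{assum:feas}.

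For the "if" direction, assume Assumption~\ref{assum:feas} holds and construct a feasible $\theta$. The key observation is that $\range(L) = \{\mathbbl{1}_n\}^\perp$ for a connected graph, so for any $P$ with $\mathbbl{1}_n^\mathsf{T}(P - P_L) = 0$ there exists $\theta$ with $L\theta = P - P_L$, i.e., $P = L\theta + P_L$. Hence it suffices to exhibit a vector $P \in \mathbb{R}^n$ satisfying $P_\ell < P < P_u$ componentwise and $\mathbbl{1}_n^\mathsf{T} P = \mathbbl{1}_n^\mathsf{T} P_L$. I would construct such a $P$ explicitly, for instance as a convex combination: pick $P = \lambda P_u + (1-\lambda) P_\ell$ with $\lambda \in (0,1)$. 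For any such $\lambda$ the strict bounds $P_{\ell,i} < P_i < P_{u,i}$ hold because $P_{\ell,i} < P_{u,i}$. Then $\mathbbl{1}_n^\mathsf{T} P = \lambda \sum_i P_{u,i} + (1-\lambda)\sum_i P_{\ell,i}$ is a continuous, strictly increasing function of $\lambda$ that ranges over the open interval $\big(\sum_i P_{\ell,i}, \sum_i P_{u,i}\big)$ as $\lambda$ ranges over $(0,1)$; since by Assumption~\ref{assum:feas} the value $\mathbbl{1}_n^\mathsf{T} P_L$ lies strictly inside this interval, there is a $\lambda \in (0,1)$ achieving $\mathbbl{1}_n^\mathsf{T} P = \mathbbl{1}_n^\mathsf{T} P_L$. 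Take $\theta$ solving $L\theta = P - P_L$; this $\theta$ is feasible.

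The only genuinely substantive ingredient is the characterization $\range(L) = \{\mathbbl{1}_n\}^\perp$ for the Laplacian of a connected undirected graph (equivalently, $\ker L = \vspan\{\mathbbl{1}_n\}$ and $L = L^\mathsf{T}$), which is standard; I would cite~\cite{LNS}. Everything else is an elementary intermediate-value argument on the scalar quantity $\mathbbl{1}_n^\mathsf{T} P$, so I do not anticipate any real obstacle — the main care needed is simply to keep all inequalities strict and to invoke connectedness exactly where the rank of $L$ is used.
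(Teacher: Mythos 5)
Your proof is correct and follows essentially the same route as the paper: necessity via left-multiplication by $\mathbbl{1}_n^\mathsf{T}$ using $\mathbbl{1}_n^\mathsf{T} L = \mathbbl{0}_n^\mathsf{T}$, and sufficiency by exhibiting a strictly feasible injection vector $P_f$ with $\mathbbl{1}_n^\mathsf{T} P_f = \mathbbl{1}_n^\mathsf{T} P_L$ and then solving $L\theta = P_f - P_L$ using $\range(L) = \{\mathbbl{1}_n\}^\perp$ for a connected graph. The only difference is that you make the existence of $P_f$ explicit via a convex combination of $P_\ell$ and $P_u$ and an intermediate-value argument, whereas the paper simply asserts it; this is a harmless (and arguably welcome) addition.
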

\begin{proof}
   Under Assumption~\ref{assum:feas}, there exists $P_f \in \mathbb{R}^n$ such that $P_\ell < P_f < P_u$ and $\mathbbl{1}_n^\mathsf{T} P_f = \mathbbl{1}_n^\mathsf{T} P_L$. Next, we note that there exists $\theta \in \mathbb{R}^n$ such that $P_f-P_L = L \theta$ if and only if $(P_f-P_L) \perp \mathbbl{1}_n$ \cite[Lem.~6.12]{LNS} or, equivalently, if and only if $\mathbbl{1}_n^\mathsf{T} (P_f-P_L) = 0$ and sufficiency of Assumption~\ref{assum:feas} immediately follows. Next, note that there only exists $\theta \in \mathbb{R}^n$ such that $P_{\ell} < L\theta + P_{L} < P_{u}$ if   
   $\mathbbl{1}_n^\mathsf{T} P_\ell < \mathbbl{1}_n^\mathsf{T} (L \theta + P_L) < \mathbbl{1}_n^\mathsf{T} P_u$. Using $\mathbbl{1}_n^\mathsf{T} L = 0$, it directly follows that $\sum_{i=1}^n P_{\ell,i} < \sum_{i=1}^n  P_{L,i} < \sum_{i=1}^n P_{u,i}$ is necessary.
\end{proof}

Next, we substitute \eqref{eq:deCFP:dcpf} into \eqref{eq:deCFP:dcpf:obj} and \eqref{eq:deCFP:dcpf:lim} and scale \eqref{eq:deCFP:dcpf:lim}  by the diagonal matrix $K_I = \diag\{\sqrt{k_{I,i}}\}_{i=1}^n \in \mathbb{S}^n_{\succ 0}$, expanding the cost function, and dropping constant terms that do not depend on $\theta \in \mathbb{R}^n$, it can be shown that the set of optimizers $\theta \in \mathbb{R}^n$ of \eqref{eq:deCFP} is equivalent to the set of optimizers of
\begin{subequations}\label{eq:pfprob2}
    \begin{align}
        &\min_{\theta} \quad \tfrac{1}{2} \norm{L\theta}^2_{M} +  (P_L-P^\star)^\mathsf{T} M L \theta  \\
        & \text {s.t. } \quad  K_I P_\ell \leq K_I (L \theta + P_L) \leq K_I P_u. \label{eq:pfprob2:lim}
        \end{align}
\end{subequations}
We introduce the dual multipliers  $\lambda_{\ell} \coloneqq (\lambda_{\ell,1},\ldots,\lambda_{\ell,n}) \in \mathbb{R}^n_{\geq 0}$, $\lambda_{u} \coloneqq (\lambda_{u,1},\ldots,\lambda_{u,n})  \in \mathbb{R}_{\geq 0}^n$ associated with the constraints \eqref{eq:pfprob2:lim} of every node. Next, we define the set of points that satisfy the Karush-Kuhn-Tucker (KKT)~\cite[Ch.~5]{boyd2004convex} conditions of \eqref{eq:pfprob2}. Note that the stationary condition requires
\begin{align*}
    L \big(M (L \theta^\star + P_L-P^\star) + K_I (\lambda^\star_u-\lambda^\star_\ell)\big) = \mathbbl{0}_n.
\end{align*}
Then, using $\ker(L)=\ker(B^\mathsf{T})$, we can express the KKT conditions as follows.
\begin{definition}[\textbf{KKT points in nodal coordinates}]\label{def:stheta}
$\mathcal{S}_{\theta} \subseteq \mathbb{R}^{3n}$ denotes the points $(\theta^\star,\lambda^\star_\ell,\lambda^\star_u)$ that satisfy the KKT conditions of \eqref{eq:pfprob2}, i.e., $P_\ell \leq L \theta^\star + P_L \leq P_u$, $(\lambda^\star_\ell,\lambda^\star_u) \in \mathbb{R}^{2n}_{\geq 0}$, and 
\begin{subequations}\label{eq:KKT:nodal}
\begin{align}
    M (L \theta^\star + P_L-P^\star) + K_I (\lambda^\star_u-\lambda^\star_\ell) &\in \ker{B^\mathsf{T}},\label{eq:KKT:nodal:optimality}\\
    \diag\{\lambda^\star_{\ell,i}\}_{i=1}^n K_I (P_\ell - L \theta^\star - P_L)&=\mathbbl{0}_n,\\
    \diag\{\lambda^\star_{u,i}\}_{i=1}^n K_I (L \theta^\star + P_L - P_u)&=\mathbbl{0}_n.
\end{align}
\end{subequations}
\end{definition}
The next property directly follows from $\ker L = \mathbbl{1}_n$ and states that KKT points of \eqref{eq:pfprob2} are neither unique nor isolated.
\begin{property}[\textbf{Non-unique KKT points}]\label{property:stheta}
    For any $\left(\theta^\star, \lambda^\star\right) \in \mathcal{S}_{\theta}$ and all $c \in \mathbb{R}$ it holds that $\left(\theta^\star + cI_n, \lambda^\star\right) \in \mathcal{S}_{\theta}$.
\end{property}
To characterize the active constraints for KKT points of \eqref{eq:pfprob2} we require the following definition.
\begin{definition}[\textbf{Active constraint sets}] \label{def:activesets}
    We define $\mathcal{I}_{\ell} \subseteq \mc N$ and $\mathcal{I}_{u} \subseteq \mc N \setminus \mathcal{I}_{\ell}$ as the set of nodes at their lower and upper limit, i.e., $i \in \mathcal{I}_{\ell}$ if and only if $P_i = P_{\ell, i}$ and $i \in \mathcal{I}_{u}$ if and only if $P_i = P_{u, i}$.    
\end{definition}
The next result shows that, if the optimizer for any node $i \in \mathcal{N}$ is at the upper limit, then the optimizer for any node $j \in \mathcal{N} \setminus \{i\}$ cannot be at the lower limit and vice versa.
\begin{proposition}[\textbf{Mutually exclusive active sets}]\label{prop:mut:activesets}
    Consider $P_\ell$, $P_u$, $P_L$, and $P^\star$ such that Assumption~\ref{assum:feas} and Assumption~\ref{assum:setpoint} hold.  Then, for all $(\theta,\lambda) \in \mathcal{S}_{\theta}$, either $\mathcal{I}_{\ell} = \emptyset$ or $\mathcal{I}_{u} = \emptyset$.
\end{proposition}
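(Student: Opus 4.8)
The plan is to argue by contradiction: suppose that at some KKT point $(\theta^\star,\lambda^\star_\ell,\lambda^\star_u)\in\mathcal{S}_\theta$ there exist a node $i\in\mathcal{I}_\ell$ and a node $j\in\mathcal{I}_u$, and show that the scalar multiplier hidden in the stationarity condition \eqref{eq:KKT:nodal:optimality} is forced to be simultaneously negative and positive.

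First I would unpack \eqref{eq:KKT:nodal:optimality}. Writing $P\coloneqq L\theta^\star+P_L$ for the optimal injections and using $\ker B^\mathsf{T}=\vspan(\mathbbl{1}_n)$, there is a scalar $c\in\mathbb{R}$ with
\begin{align*}
 M(P-P^\star)+K_I(\lambda^\star_u-\lambda^\star_\ell)=c\,\mathbbl{1}_n.
\end{align*}
Reading off component $k\in\mathcal{N}$ and using $M=\diag\{m_k\}$, $K_I=\diag\{\sqrt{k_{I,k}}\}$ with $m_k,k_{I,k}>0$ gives $m_k(P_k-P^\star_k)+\sqrt{k_{I,k}}(\lambda^\star_{u,k}-\lambda^\star_{\ell,k})=c$ for every $k$. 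The key point is that this $c$ is common to all nodes, which is exactly what $\ker L=\vspan(\mathbbl{1}_n)$ provides.

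Next I would evaluate this identity at the two active nodes. For $i\in\mathcal{I}_\ell$ we have $P_i=P_{\ell,i}$ by Definition~\ref{def:activesets}; since $P_{\ell,i}<P_{u,i}$ by Assumption~\ref{assum:feas}, complementary slackness forces $\lambda^\star_{u,i}=0$, so $c=m_i(P_{\ell,i}-P^\star_i)-\sqrt{k_{I,i}}\lambda^\star_{\ell,i}$. Because $P^\star_i>P_{\ell,i}$ by Assumption~\ref{assum:setpoint} and $\lambda^\star_{\ell,i}\geq 0$, both terms are nonpositive and the first is strictly negative, hence $c<0$. Symmetrically, for $j\in\mathcal{I}_u$ we have $P_j=P_{u,j}$, complementary slackness (using $P_{\ell,j}<P_{u,j}$) forces $\lambda^\star_{\ell,j}=0$, and $c=m_j(P_{u,j}-P^\star_j)+\sqrt{k_{I,j}}\lambda^\star_{u,j}>0$ using $P^\star_j<P_{u,j}$ and $\lambda^\star_{u,j}\geq 0$. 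This contradicts $c<0$, so $\mathcal{I}_\ell$ and $\mathcal{I}_u$ cannot both be nonempty, which is the claim.

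I do not expect a genuine obstacle here; the proof is short once the stationarity condition is projected onto $\mathbbl{1}_n$. The only thing to be careful about is tracking where strictness enters: Assumption~\ref{assum:feas} ($P_{\ell,i}<P_{u,i}$) is what lets complementary slackness eliminate the ``other'' multiplier at an active node, and Assumption~\ref{assum:setpoint} ($P_{\ell,i}<P^\star_i<P_{u,i}$) is what turns the sign of $c$ into a \emph{strict} inequality rather than merely $c\leq 0$ and $c\geq 0$, which by itself would not yield a contradiction.
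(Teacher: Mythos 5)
Your proof is correct and follows essentially the same route as the paper: both arguments exploit that the stationarity condition \eqref{eq:KKT:nodal:optimality} forces $m_k(P_k-P^\star_k)+\sqrt{k_{I,k}}(\lambda^\star_{u,k}-\lambda^\star_{\ell,k})$ to take a common value across all nodes, then use complementary slackness, dual feasibility, and Assumption~\ref{assum:setpoint} to show this value would have to be simultaneously strictly negative (at a node in $\mathcal{I}_\ell$) and strictly positive (at a node in $\mathcal{I}_u$). Your explicit introduction of the scalar $c$ is just a cleaner bookkeeping of the paper's step of equating the $i$-th and $j$-th components directly.
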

\begin{proof}
    We will prove the result by contradiction. For any KKT point $(\eta^\star, \lambda^\star) \in \mc S_\eta$, it holds that
\begin{align*}
    M\left(BV\eta^\star + P_{L} - P^\star\right) + K_I (\lambda^\star_{u} - \lambda^\star_{\ell}) \in \ker(B^\mathsf{T}).
\end{align*}
Moreover, $\ker(B^\mathsf{T})\!=\!\vspan(\mathbbl{1}_n)$ and, by complementary slackness, $\lambda^\star_{\ell,i}=0$ and $\lambda^\star_{u,j}=0$ for any $(i,j) \in \mathcal{I}_{u} \times \mathcal{I}_{\ell}$. Thus,
\begin{align*}
    m_i\left(P_i -P^\star_{i}\right) &+ \sqrt{k_i}\lambda^\star_{u, i} = m_j\left(P_j -P^\star_{j}\right) - \sqrt{k_j}\lambda^\star_{\ell, j}
\end{align*} 
has to hold for all $(i,j) \in \mathcal{I}_{u} \times \mathcal{I}_{\ell}$. By feasibility of $(\eta^\star, \lambda^\star) \in \mathcal{S}_\eta$, we have $P_{i} = P_{u, i}$ for all $i \in \mathcal{I}_{u}$ and $P_{j} = P_{\ell, j}$ for all $j \in \mathcal{I}_{\ell}$. Then, dual-feasibility and Assumption~\ref{assum:setpoint} imply that
\begin{align*}
m_i\underbrace{\left(P_{u, i}- P_{i}^\star\right)}_{> 0}   + \underbrace{\sqrt{k_i}\lambda^\star_{u, i}}_{\geq 0} = m_i \underbrace{\left(P_{\ell, i} - P_{j}^\star\right)}_{< 0} - \underbrace{\sqrt{k_j}\lambda^\star_{\ell, j}}_{\geq 0},
\end{align*}
i.e., no pair $(i,j) \in \mathcal{I}_u \times \mc I_\ell$ can exist if $(\eta^\star, \lambda^\star) \in \mc S_\eta$.
\end{proof}
Moreover, we note that the active constraint set for any optimizer $(\theta,\lambda) \in \mathcal{S}_{\theta}$ can be further characterized in terms of the disturbances $P_L$ and references $P^\star$ as shown in Sec.~\ref{subsec:dynconstr}.
    
\subsection{Review of projections and primal-dual dynamics}
A common approach to solve \eqref{eq:pfprob2} in a distributed fashion is to leverage primal-dual dynamics~\cite{ashish} associated with \eqref{eq:pfprob2}. However, as illustrated below, applying standard primal-dual dynamics to \eqref{eq:pfprob2} results in an algorithm that requires the exchange of dual multipliers between nodes. We require the following definitions to formulate dynamics for solving \eqref{eq:dcpfeq}.
    \begin{definition}[\textbf{Normal and tangent cone}]\label{def:tangentcone}
    Given a non-empty convex set $\mathcal{C} \subseteq \mathbb{R}^n$, and a point $x \in \mathcal{C}$, the normal cone $\mathcal N_{x} \mathcal{C}$ is given by
    \begin{align*}
        \mathcal{N}_{x}\mathcal{C} \coloneqq \left\{w\in \mathbb{R}^n \mid w^\mathsf{T}\left(x^{\prime}-x\right) \leq 0, \quad \forall x^{\prime} \in \mathcal{C}\right\}.
    \end{align*} 
    Then, the tangent cone of the set $\mathcal{C}$ at the point $x$ is defined as the polar cone of the normal cone
    \begin{align*}
        \mathcal{T}_{x}\mathcal{C} \coloneqq \left\{v \in \mathbb{R}^n \mid v^\mathsf{T} w \leq 0,\quad \forall w \in \mc N_x \mc C\right\}.
    \end{align*} 
\end{definition}
Next, we define the projection operator.
\begin{definition}[\textbf{Projection}]\label{def:projection}
    Given a convex set $\mathcal{C} \subseteq \mathbb{R}^n$ and a vector $v \in \mathbb{R}^n$, $\Pi_{\mathcal{C}}(v)$ denotes the projection of $v$ with respect to the set $\mathcal{C}$, i.e., $\Pi_{\mathcal{C}}(v) = \argmin\nolimits_{p \in \mathcal{C}} \norm{p - v}$.
\end{definition}
Broadly speaking, projecting a dynamical system $\ddt x = f(x)$ onto a  set $\mc C$ results in the projected dynamical system $\ddt x = \Pi_{\mathcal{T}_{x}\mathcal{C}}(f(x))$ that does not leave the set $\mc C$~\cite{DN1993}. While inherently discontinuous, strong theoretical results on the stability and convergence properties of projected dynamical systems are available in the literature (see, e.g., \cite{HBD2021}).

Consider the CFP \eqref{eq:deCFP}, a regularization parameter $\rho \in \mathbb{R}_{>0}$, and the augmented Lagrangian 
$\mathcal{L}_\rho \coloneqq \tfrac{1}{2} \norm{P-P^\star}_M + \lambda_{u}^\mathsf{T}(P-P_u) + \lambda_{\ell}^\mathsf{T} (P_\ell -P) + \frac{\rho}{2} (\|\Pi_{\mathbb{R}^n_{\geq 0}}\left(P_{\ell} - P\right)\|^2 + \|\Pi_{\mathbb{R}^n_{\geq 0}} \left(P - P_{u} \right)\|^2)$.

Applying primal-dual gradient dynamics \cite{ashish} to $\mathcal{L}_\rho$ results in the \emph{distributed} dynamics
\begin{subequations}
    \begin{align}
        \ddt \theta =& -L M (P -P^\star) - \rho L  \Pi_{\mathbb{R}^n_{\geq 0}}\left(P - P_{u} \right) - L  \lambda_{u}
        \label{eq:pdflow:primal}\\
         & + \rho L  \Pi_{\mathbb{R}^n_{\geq 0}} \left(P_{\ell} - P \right)  + L  {\lambda_\ell}, \nonumber\\
        \ddt \lambda_{\ell,i} =& \Pi_{\mathcal{T}_{\lambda_{\ell,i}} \mathbb{R}_{\geq 0}}(P_{\ell,i} - P_i), \label{eq:pdflow:lambda:l}\\
        \ddt \lambda_{u,i} =& \Pi_{\mathcal{T}_{\lambda_{u,i}} \mathbb{R}_{\geq 0}}(P_i - P_{u,i}).\label{eq:pdflow:lambda:u}
    \end{align}
\end{subequations}
Notably, the dynamics of the dual multipliers \eqref{eq:pdflow:lambda:l} and \eqref{eq:pdflow:lambda:u} can be implemented at every node $i \in \mathcal{N}$ using only local information. However, the primal dynamics \eqref{eq:pdflow:primal} cannot be implemented using only local information, e.g., evaluating $L \lambda_{\ell}$ requires exchanging dual multipliers between nodes.

\subsection{Networked dynamics}
In the remainder of this paper, we will focus on the \emph{networked} dynamics 
\begin{subequations}\label{eq:plimdroopsc}
    \begin{align}
    \!\!\!\!\!\!\!\!\!\ddt \theta_i =& m_i (P^\star_i-P_i)\! -\! k_{P,i} \Pi_{{\mathbb{R}}_{\geq 0}}(P_i-P_{u,i}) \label{eq:plimdroop:angle}\\ &+ k_{P,i} \Pi_{{\mathbb{R}}_{\geq 0} }(P_{\ell,i} - P_i) - \sqrt{k_{I,i}} (\lambda_{u,i}-\lambda_{\ell,i}), \nonumber\\
    \!\!\!\!\!\!\!\!\!\ddt\lambda_{\ell,i} =&  \Pi_{\mathcal{T}_{\lambda_{\ell,i}} \mathbb{R}_{\geq 0}} \left(\sqrt{k_{I,i}}(P_{\ell,i} - P_i)\right),\\
    \!\!\!\!\!\!\!\!\!\ddt\lambda_{u,i} =&  \Pi_{\mathcal{T}_{\lambda_{u,i}} \mathbb{R}_{\geq 0}} \left(\sqrt{k_{I,i}}(P_i - P_{u,i})\right).
    \end{align}
\end{subequations}
for all $i \in \mathcal{N}$ with gains $k_{P,i} \in \mathbb{R}_{>0}$ and $k_{I,i} \in \mathbb{R}_{>0}$ and \emph{local} controller states $\theta_i$,  $\mu_{u,i} \in \mathbb{R}_{\geq 0}$, and $\mu_{\ell,i} \in \mathbb{R}_{\geq 0}$. The main contribution of this work is to show that, under mild conditions, \eqref{eq:plimdroopprime} converges to the set of KKT points of \eqref{eq:pfprob2}.

\subsection{Main results}
Before stating our main result, we require the following definition of a forward invariant set.
\begin{definition}[\textbf{Forward invariant set}]\label{def:finv}
A set $\mc D$ is called forward invariant under the dynamics $\ddt x = f(x)$ if, for any $x_0 \in \mc D$, it holds that $\varphi_x(t,x_0) \in\mc D$ for all $t \in \mathbb{R}_{>0}$.
\end{definition}
In other words, Definition~\ref{def:finv} requires that the solution $\varphi_x(t,x_0)$ of a dynamical system remains in $\mc D$ for all times if the initial condition $x_0$ is in  $\mc D$. Moreover, we require the following definition of stability with respect to a set.

\begin{definition}[\textbf{Asymptotic stability with respect to a set}]\label{def:GAS}
Given a dynamic system $\ddt x = f(x)$ and forward invariant set $\mc D$, $\ddt x = f(x)$ is called globally asymptotically stable with respect to a set $\mc C \subseteq \mc D$ in $\mc D$ if 
\begin{enumerate}[label=(\roman*)]
 \item it is globally attractive with respect to $\mc C$, i.e., $
     \lim_{t\to\infty} \norm{\varphi_{\blue x}(t,x_0)}_{\mc C} = 0$ holds for all $x_0 \in \mc D$, and \label{def:GAS:attr}
        \item it is Lyapunov stable with respect to $\mc C$, i.e., for every $\varepsilon \in \mathbb{R}_{>0}$ there exists $\delta \in \mathbb{R}_{>0}$ such that $x_0 \in \mc D$ and $\norm{x_0}_{\mc C} < \delta$ implies $\norm{\varphi_{x}(t,x_0)}_{\mc C} < \varepsilon$ for all $t \in \mathbb{R}_{\geq 0}$.\label{def:GAS:stab}
\end{enumerate}
\end{definition}
Notably, the definition of asymptotic stability typically assumes compactness of the set $\mc C$~\cite{A2004}. Because Definition~\ref{def:GAS} does not require $\mc C$ to be compact, stability with respect to $\mc C$ does not necessarily imply convergence to a limit cycle or an equilibrium, but trajectories may tend to infinity within the set $\mc C$. In the application at hand, the dynamics not bounded by $\mc C$ correspond to the synchronous frequency dynamics upon convergence, which we will study separately.

Using standard arguments (see~\cite[Ch.~V]{H67,K2014}) one can show that Definition~\ref{def:GAS} is identical to the following condition.
\begin{definition}{\bf(Comparison functions)}
    A function $\chi:\mathbb{R}_{\geq0} \to \mathbb{R}_{\geq0}$ is of class $\mathscr{K}$ if it is continuous, 
   strictly increasing and $\chi(0)=0$. A function $\chi:  \mathbb{R}_{\geq0} \to \mathbb{R}_{>0}$ is of class $\mathscr{L}$ if it is continuous, non-increasing, and $\chi(s) \to 0$ as $s \to \infty$.
\end{definition}
\begin{condition}[\textbf{Comparison function characterization}]\label{def:compGAS}
    Consider a function $\chi(\norm{x_0}_{\mc C},t) \in \mathscr{KL}$, i.e., it is of class $\mathscr{K}$ in its first argument and class $\mathscr{L}$ in its second argument. Asymptotic stability with respect to $\mc C$ in $\mc D$ is equivalent to 
    \begin{align*}
    \norm{\varphi_x(t,x_0)}_{\mc C} \leq \chi(\norm{x_0}_{\mc C},t), \quad \forall x_0 \in \mc D, \forall t \in \mathbb{R}_{\geq 0}.
    \end{align*}
    \end{condition}

If $\mc D = \mathbb{R}^n$, then Definition~\ref{def:GAS} implies global asymptotic stability. For the system \eqref{eq:plimdroopsc}, $\mc D = \mathbb{R}^n \times \mathbb{R}^{2n}_{\geq 0} \neq \mathbb{R}^n$ but contains all initial conditions with non-negative dual multipliers. Thus, with a slight abuse in terminology, we will refer to the system as globally asymptotically stable for $\mc D = \mathbb{R}^n \times \mathbb{R}^{2n}_{\geq 0}$. To characterize the dynamics on the set $\mc C = \mathcal{S}_{\theta}$, we introduce $\omega = \ddt \theta$. We are now ready to state our main stability result. 

\begin{theorem}[\textbf{Global asymptotic stability}]\label{thm:GASpowerlimit}
    Consider $P_\ell$, $P_u$, $P_L$, and $P^\star$ such that Assumption~\ref{assum:feas} and Assumption~\ref{assum:setpoint} hold. For any connected graph $\mc G$, \eqref{eq:plimdroopsc} is globally asymptotically stable on $\mathbb{R}^n \times \mathbb{R}^{2n}_{\geq 0}$ with respect to the set  $\mathcal{S}_{\theta}$. Moreover, there exists $\omega_s \in \mathbb{R}$ such that $\lim_{t \rightarrow \infty} \omega_i(t) = \omega_{s}$ for all $i \in \mc N$.
    %
\end{theorem}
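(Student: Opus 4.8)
The plan is to show that, after a linear change of variables to the edge coordinates of Sec.~\ref{sec:stabanalysis}, \eqref{eq:plimdroopsc} is exactly the primal--dual projected dynamics of a constrained flow problem whose primal part is strongly convex, so that global convergence to its unique saddle point follows from a Lyapunov/LaSalle argument, and then to translate the conclusions back to nodal coordinates. As a first step I would record well-posedness and forward invariance of $\mc D=\mathbb{R}^n\times\mathbb{R}^{2n}_{\geq 0}$: the field defining $\ddt\theta$ in \eqref{eq:plimdroop:angle} is globally Lipschitz (a composition of linear maps with the $1$-Lipschitz map $\Pi_{\mathbb{R}_{\geq 0}}(\cdot)=\max\{\cdot,0\}$) and \eqref{eq:plimdroopsc} is the projection of this field onto the convex set $\mc D$, so Carath\'eodory solutions exist, are unique, and $\mc D$ is forward invariant by the theory of projected dynamical systems~\cite{DN1993,HBD2021} (whenever $\lambda_{\ell,i}=0$ one has $\mathcal T_{0}\mathbb{R}_{\geq 0}=\mathbb{R}_{\geq 0}$, so $\ddt\lambda_{\ell,i}\geq 0$, and likewise for $\lambda_{u,i}$).

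The structural core is the change of coordinates. I would introduce edge coordinates $\eta$ through a linear map $\theta\mapsto\eta$ with kernel $\vspan(\mathbbl{1}_n)$ and a matrix $V$ with $L\theta=BV\eta$, hence $P=BV\eta+P_L$; substituting into \eqref{eq:plimdroopsc} and matching terms shows the result is precisely the primal--dual projected dynamics of the constrained flow problem in edge coordinates, the $k_{P,i}\Pi_{\mathbb{R}_{\geq 0}}(\cdot)$ terms reproducing the augmented-Lagrangian regularization (the $\rho$-terms of $\mathcal L_\rho$) and the $\sqrt{k_{I,i}}\lambda_{\cdot,i}$ terms the multiplier feedback, while---crucially, in contrast to \eqref{eq:pdflow:primal}---no multiplier exchange survives the change of variables. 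The decisive structural fact is that in $\eta$ the primal objective is strongly convex, its Hessian being $V^\mathsf{T}B^\mathsf{T}MBV\succ 0$ because $M\succ 0$ and $BV$ has full column rank on a connected graph (equivalently, the reduced edge Laplacian is positive definite). By Assumption~\ref{assum:feas} and Proposition~\ref{prop:feas} the box constraints are strictly feasible, so a saddle point of the convex--concave Lagrangian exists; strong convexity makes its primal part $\eta^\star$ unique, and inspecting \eqref{eq:KKT:nodal:optimality} with Definition~\ref{def:activesets} and Proposition~\ref{prop:mut:activesets} shows the associated multiplier $\lambda^\star=(\lambda^\star_\ell,\lambda^\star_u)$ is uniquely determined as well. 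Denote by $\mathcal{S}_\eta=\{(\eta^\star,\lambda^\star)\}$ the resulting set of KKT points in edge coordinates.

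For convergence I would use $\mathcal V(\eta,\lambda)=\tfrac12\norm{\eta-\eta^\star}^2+\tfrac12\norm{\lambda-\lambda^\star}^2$. Along solutions, convexity of the objective in $\eta$, affineness in $\lambda$, monotonicity of the projection/normal-cone terms, and the saddle inequalities at $(\eta^\star,\lambda^\star)$ give $\ddt\mathcal V\leq-\mu\norm{\eta-\eta^\star}^2$ for some $\mu>0$; since $\mathcal V$ is radially unbounded, $(\eta(t),\lambda(t))$ stays bounded, so $\eta\to\eta^\star$, and a LaSalle invariance argument for projected dynamical systems~\cite{HBD2021}, together with uniqueness of $\lambda^\star$, yields $\lambda\to\lambda^\star$, i.e., $\mathcal V\to 0$. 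Since $\norm{(\theta,\lambda)}_{\mathcal{S}_\theta}$ is equivalent to $\sqrt{2\mathcal V}$ (the $\vspan(\mathbbl{1}_n)$-component of $\theta$, cf. Property~\ref{property:stheta}, does not enter the distance to $\mathcal{S}_\theta$), monotonicity of $\mathcal V$ gives Lyapunov stability and $\mathcal V\to 0$ gives global attractivity with respect to $\mathcal{S}_\theta$ on $\mc D$; as $\mathcal{S}_\theta$ is closed but unbounded I would phrase this via the $\mathscr{KL}$-bound of Condition~\ref{def:compGAS} rather than the textbook LaSalle theorem.

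Finally, synchronization follows by passing to the limit: $P(t)=L\theta(t)+P_L=BV\eta(t)+P_L\to BV\eta^\star+P_L=:P^\infty$ and $\lambda(t)\to\lambda^\star$, and since $P_\ell\leq P^\infty\leq P_u$ both projection terms in \eqref{eq:plimdroop:angle} vanish in the limit, whence $\omega_i(t)\to m_i(P^\star_i-P^\infty_i)-\sqrt{k_{I,i}}(\lambda^\star_{u,i}-\lambda^\star_{\ell,i})$, which by \eqref{eq:KKT:nodal:optimality} is a single constant $\omega_s$ independent of $i\in\mc N$. I expect the main obstacle to be the second step: engineering the edge coordinates so that \eqref{eq:plimdroopsc} matches the edge-coordinate primal--dual dynamics \emph{exactly} (not merely modulo $\ker L$) and so that the transformed primal objective is strongly convex, and then running the LaSalle/invariance argument rigorously for a discontinuous, projected vector field with a non-compact attractor $\mathcal{S}_\theta$---precisely what Condition~\ref{def:compGAS} and the well-posedness theory of Carath\'eodory solutions are introduced to handle; the remaining pieces (forward invariance, the limit computation of $\omega_s$) are routine.
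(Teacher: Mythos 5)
Your overall route is the paper's route---pass to edge coordinates, identify \eqref{eq:plimdroopsc} with the primal--dual dynamics of the CFP in edge coordinates, invoke saddle-point/LaSalle convergence there, and pull the conclusion back to nodal coordinates---but there is a genuine gap at the step you yourself flag as ``decisive.'' You claim that the edge-coordinate Hessian $V B^\mathsf{T} M B V$ is positive definite because ``$BV$ has full column rank on a connected graph.'' This is false whenever $\mc G$ contains a cycle: for a connected graph $\rank B = n-1$, so if $e > n-1$ the matrix $BV \in \mathbb{R}^{n\times e}$ has a nontrivial kernel (the weighted cycle space), and $V B^\mathsf{T} M B V$ is only positive \emph{semi}definite. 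The theorem is stated for \emph{any} connected graph, so meshed networks must be covered. Consequently your Lyapunov estimate $\ddt \mathcal V \leq -\mu \norm{\eta-\eta^\star}^2$ with $\mu>0$ cannot hold, $\mc S_\eta$ is not a singleton (the paper's Corollary~\ref{corr:radial} makes uniqueness a special feature of trees), and your uniqueness claim for $\lambda^\star$ is unsupported. The paper closes exactly this gap by a second change of coordinates $\eta = \Gamma\gamma = \Gamma_+\gamma_+ + \Gamma_0\gamma_0$ along the eigenspaces of $V B^\mathsf{T} M B V$: the $\gamma_+$ block yields a strictly convex reduced problem \eqref{eq:pfangledifferenceplus} to which the primal--dual convergence theorem of \cite{ashish} applies, while $\ddt\gamma_0 = 0$, and stability with respect to the \emph{set} $\mc S_\eta$ (rather than a point) follows from the invariance of $\gamma_0$. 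Your argument needs this decomposition, or an equivalent quotient construction, to go through.

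A secondary weak point: you assert that $\norm{(\theta,\lambda)}_{\mc S_\theta}$ is ``equivalent to $\sqrt{2\mathcal V}$'' because the $\vspan(\mathbbl{1}_n)$-component of $\theta$ does not enter the distance. The idea is right, but asymptotic stability with respect to a non-compact set requires a genuine two-sided bound $\underline{\kappa}\,\norm{T_\eta(\theta,\lambda)}_{\mc S_\eta} \leq \norm{(\theta,\lambda)}_{\mc S_\theta} \leq \overline{\kappa}\,\norm{T_\eta(\theta,\lambda)}_{\mc S_\eta}$ to convert the $\mathscr{KL}$ estimate in edge coordinates into one in nodal coordinates; the lower constant comes from $\norm{T_\eta}^{-1}$ and the upper one from $\sigma_{\mathbbl{1}} = \min_{\theta\perp\mathbbl{1}_n,\,\norm{\theta}=1}\norm{VB^\mathsf{T}\theta}>0$ together with the observation that the minimizing $\eta'\in\mc S_\eta$ can be taken in $\Ima(VB^\mathsf{T})$. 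This is a short but necessary computation, not an equivalence one can assert. Your well-posedness remarks and the limit computation of $\omega_s$ (projection terms vanish on the feasible set, then $VB^\mathsf{T}\omega\to\mathbbl{0}_e$ forces $\omega\to\mathbbl{1}_n\omega_s$ since $\ker(B^\mathsf{T})=\vspan(\mathbbl{1}_n)$) match the paper and are fine.
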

A proof is provided in Section~\ref{sec:stabanalysis}. Theorem~\ref{thm:GASpowerlimit} shows that \eqref{eq:plimdroopsc} is Lyapunov stable with respect to the set $\mathcal{S}_{\theta}$ of KKT points of \eqref{eq:pfprob2} and converges to an optimizer in $\mathcal{S}_{\theta}$ as $t\to\infty$. Notably, by Property~\ref{property:stheta} this implies that \eqref{eq:plimdroopsc} is globally asymptotically stable with respect to a synchronous motion in the set $\mathcal{S}_{\theta}$ but not necessarily with respect to an equilibrium point.

\begin{remark}[\textbf{Convergence rate}]
    Because our results leverage the LaSalle function-based results from \cite{ashish}, they do not provide a convergence rate. We conjecture that replacing the projection of the dual multipliers in the networked dynamics \eqref{eq:plimdroopsc} with projection-free dual multiplier dynamics introduced in~\cite{QL2019} could provide a pathway to establish exponential convergence.
\end{remark}

\section{Stability analysis}\label{sec:stabanalysis}
In this section, we present the stabiltiy analysis and proofs that establish our main results stated in the previous section.

\subsection{Overview and proof strategy}
To establish the equivalence between optimizers of the CFP \eqref{eq:pfprob2} and limit points of the networked dynamics \eqref{eq:plimdroopsc} we will use the proof strategy shown in Fig.~\ref{fig:strategy}. Our results crucially depend on two main steps.

First, we use the oriented incidence matrix $B$ and decomposition $V\in\mathbb{R}^{e \times e}$ of the weight matrix $W=VV \in\mathbb{R}^{e \times e}$ of the graph $\mc G$ to define the change of coordinates $\eta = VB^\mathsf{T} \theta$. Notably, this change of coordinates transforms nodal angles to angle differences across graph edges (i.e., transmission lines). Applying this change of coordinates to the CFP \eqref{eq:pfprob2} results in an optimization problem in edge coordinates whose KKT points can be related to the KKT points of the CFP \eqref{eq:pfprob2} under the restriction $V \in \Ima(B^\mathsf{T})$. We emphasize that, beyond providing an interpretation of the networked dynamics \eqref{eq:plimdroopsc}, our results crucially hinge on the equivalence of the networked dynamics \eqref{eq:plimdroopsc} and primal-dual dynamics of the CFP \eqref{eq:pfprob2} after applying the change to edge coordinates.

Second, we show that, in edge coordinates $\eta$, the networked dynamics \eqref{eq:plimdroopsc} can be interpreted as a distributed primal-dual algorithm solving \eqref{eq:pfprob2} while maintaining $\eta(t) \in \Ima (VB^\mathsf{T})$ for all times $t \in \mathbb{R}_{\geq 0}$ if $\eta(0) \in \Ima (VB^\mathsf{T})$. Notably, the dynamics in edge coordinates can be decomposed into (i) dynamics associated with primal-dual dynamics resulting from a strictly convex problem, and (ii) remaining dynamics that are stable in the sense of Lyapunov. The primal-dual dynamics resulting from a strictly convex problem can then be analyzed using well-known results from~\cite{ashish}. Notably, this second change of coordinates is required to decompose the system into its  asymptotically stable dynamics and null dynamics that, in the context of power systems, correspond to invariance of the dynamics under rotation.

Combining the aforementioned results allows us to establish that the networked dynamics \eqref{eq:plimdroopsc} are globally asymptotically stable with respect to the set of KKT points of the constrained flow problem \eqref{eq:pfprob2}. 
\begin{figure*}[htbp]
    \begin{center}
            \vspace{-1em}
        \includegraphics[width=0.9\linewidth]{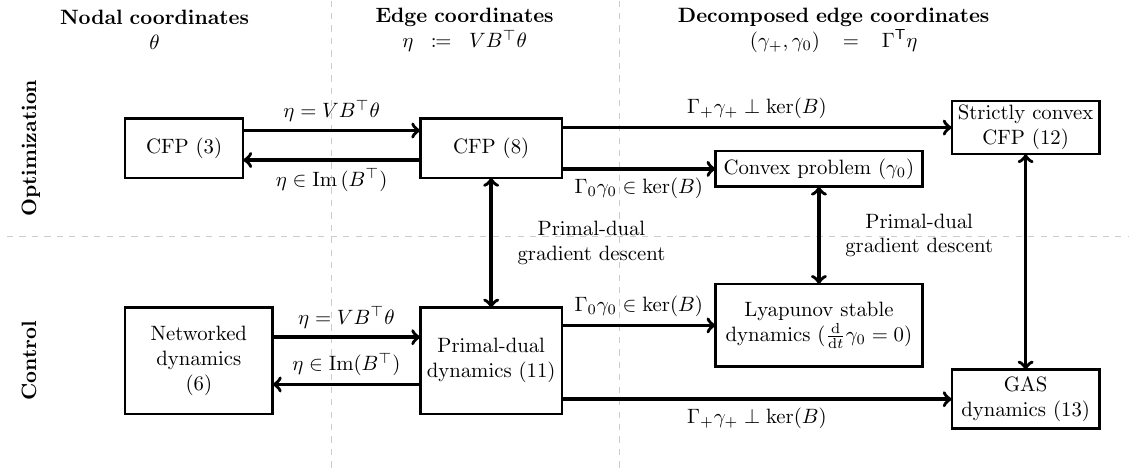}
        \caption{The networked dynamics in edge coordinates coincide with the primal-dual dynamics associated of the CFP in edge coordinates. To establish our main result, we show that the dynamics in edge and node coordinates are identical up to the dynamics of $\eta \in \ker(B^\mathsf{T})$. \label{fig:strategy}}
    \end{center}
\end{figure*}

\subsection{Constrained flow problem in edge coordinates}
To establish our main result, we reformulate the constrained flow problem \eqref{eq:pfprob2} in edge coordinates. To this end, consider the (weighted) differences  $\eta \coloneqq VB^\mathsf{T} \theta \in \mathbb{R}^{e}$ between nodal variables and the decomposition
\begin{align}\label{eq:coordination_changing}
    L\theta = BV V B^\mathsf{T} \theta = BV \eta
\end{align}
of the Laplacian matrix $L$ into its oriented incidence matrix $B \in \mathbb{R}^{n \times e}$ and weight matrix $V \coloneqq W^{\frac{1}{2}} \in \mathbb{R}^{e \times e}$. Applying  \eqref{eq:coordination_changing} to \eqref{eq:pfprob2} results in the constrained flow problem in \emph{edge coordinates} 
\begin{subequations}\label{eq:pfangledifference}
\begin{align}
    &\min_\eta \tfrac{1}{2} \norm{ BV\eta}^2_{M}   +\left(P_L-P^\star\right)^{\mathsf{T}} M BV\eta   \\ 
    & \text{s.t.} \quad K_I P_{\ell} \leq K_I (BV\eta+P_L)  \leq K_I P_u.\end{align}
\end{subequations}
Notably, the Hessian $V B^\mathsf{T} M BV \in \mathbb{S}^{e}_{\succeq 0}$ of \eqref{eq:pfangledifference} becomes a weighted edge Laplacian matrix (see~\cite{ZM2011,ZB2014} for details) if $M=c I_e$ for some $c \in \mathbb{R}_{>0}$. We will show that $V B^\mathsf{T} M BV \in \mathbb{S}^{e}_{\succ 0}$ (i.e., \eqref{eq:pfangledifference} is strictly convex) when $\mc G$ has no cycles but $V B^\mathsf{T} M BV \in \mathbb{S}^{e}_{\succeq 0}$ otherwise. In other words, $\eta = VB^\mathsf{T} \theta$ is generally not a similarity transform. Before investigating this aspect further, the same steps as in the proof of Proposition~\ref{prop:feas} can be used to show that \eqref{eq:pfangledifference} admits a feasible solution under Assumption~\ref{assum:feas}.
\begin{proposition}[\textbf{Feasibility in edge coordinates}]\label{prop:feasdiff}
    There exists $\eta \in \mathbb{R}^n$ such that $P_{\ell} < B V \eta + P_{L} < P_{u}$ if and only if $P_\ell$, $P_u$, and $P_L$ satisfy Assumption~\ref{assum:feas}.
\end{proposition}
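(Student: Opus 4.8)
The plan is to reduce the claim to Proposition~\ref{prop:feas} by observing that the set of injections reachable in edge coordinates coincides with the set reachable in nodal coordinates. Since $V = W^{\frac 12}$ is a diagonal matrix with strictly positive diagonal entries, it is invertible, and hence $\range(BV) = \range(B)$. For a connected graph, $\range(B) = \{x \in \mathbb{R}^n \mid \mathbbl{1}_n^\mathsf{T} x = 0\} = \range(L)$ (see~\cite[Lem.~6.12]{LNS} and recall $L = BWB^\mathsf{T}$). Consequently the affine sets $\{BV\eta + P_L \mid \eta \in \mathbb{R}^e\}$ and $\{L\theta + P_L \mid \theta \in \mathbb{R}^n\}$ are identical, so there exists $\eta$ with $P_\ell < BV\eta + P_L < P_u$ if and only if there exists $\theta \in \mathbb{R}^n$ with $P_\ell < L\theta + P_L < P_u$. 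The latter is equivalent to Assumption~\ref{assum:feas} by Proposition~\ref{prop:feas}, which completes the argument.

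Alternatively, one can repeat the two directions of the proof of Proposition~\ref{prop:feas} verbatim, as already hinted in the text. For sufficiency, under Assumption~\ref{assum:feas} one picks $P_f \in \mathbb{R}^n$ with $P_\ell < P_f < P_u$ and $\mathbbl{1}_n^\mathsf{T} P_f = \mathbbl{1}_n^\mathsf{T} P_L$; then $P_f - P_L \perp \mathbbl{1}_n$, so $P_f - P_L \in \range(B) = \range(BV)$ and there is $\eta$ with $BV\eta = P_f - P_L$, i.e., $BV\eta + P_L = P_f$ is feasible. For necessity, $\mathbbl{1}_n^\mathsf{T} B = \mathbbl{0}_e^\mathsf{T}$ gives $\mathbbl{1}_n^\mathsf{T}(BV\eta + P_L) = \mathbbl{1}_n^\mathsf{T} P_L$ for every $\eta$, so a feasible $\eta$ forces $\sum_{i} P_{\ell,i} < \sum_{i} P_{L,i} < \sum_{i} P_{u,i}$; the pointwise inequality additionally forces $P_{\ell,i} < P_{u,i}$ for all $i$, which is exactly Assumption~\ref{assum:feas}.

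There is essentially no obstacle here: the only ingredient beyond Proposition~\ref{prop:feas} is the elementary observation that multiplying $B$ on the right by the invertible matrix $V$ does not change its column space, so that nodal and edge coordinates generate the same achievable injections $BV\eta + P_L$; the strict-inequality bookkeeping is then identical to that in Proposition~\ref{prop:feas}. (For consistency with the statement I would also correct "$\eta \in \mathbb{R}^n$" to "$\eta \in \mathbb{R}^e$".) If a self-contained proof that does not invoke Proposition~\ref{prop:feas} is preferred, the second argument above is the route to take.
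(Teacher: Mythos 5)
Your proof is correct and matches the paper's intent exactly: the paper itself only remarks that ``the same steps as in the proof of Proposition~\ref{prop:feas}'' apply, which is your second argument, and your first argument (that $V$ invertible gives $\range(BV)=\range(B)=\range(L)=\mathbbl{1}_n^\perp$ for a connected graph, so the achievable injection sets coincide) is just a cleaner packaging of the same underlying fact. The typo correction $\eta\in\mathbb{R}^e$ is also right.
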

Next, we characterize the optimizers of \eqref{eq:pfangledifference}. Note that the stationary condition requires
\begin{align*}
    B^\mathsf{T}\big(M (BV \eta^\star + P_L - P^\star) + K_I (\lambda^\star_u-\lambda^\star_\ell)\big) = \mathbbl{0}_{e}.
\end{align*}
Thus, the KKT conditions can be expressed as follows.
\begin{definition}[\textbf{KKT points of CFP in edge coordinates}]\label{def:seta}
$\mathcal{S}_{\eta} \subseteq \mathbb{R}^{e + 2n}$ denotes the set of points $(\eta^\star,\lambda^\star_\ell,\lambda^\star_u)$ that satisfy the KKT conditions of the CFP in edge coordinates \eqref{eq:pfangledifference}, i.e., $P_\ell \leq BV \eta^\star + P_L \leq P_u$,  $(\lambda^\star_\ell,\lambda^\star_u) \in \mathbb{R}^{2n}_{\geq 0}$, and 
\begin{subequations}\label{eq:KKT:edge} 
\begin{align} 
    M (BV \eta^\star + P_L - P^\star) + K_I (\lambda^\star_u-\lambda^\star_\ell) &\in \ker(B^\mathsf{T}),\label{eq:KKT:edge:optimality}\\
    \diag\{\lambda^\star_{u,i}\}_{i=1}^n K_I (BV \eta^\star + P_L - P_u)&=\mathbbl{0}_n,\\
    \diag\{\lambda^\star_{\ell,i}\}_{i=1}^n K_I (P_\ell - BV \eta^\star - P_L)&=\mathbbl{0}_n.
\end{align}
\end{subequations}
\end{definition}
The following result clarifies the relationship between KKT points of the constrained flow problem \eqref{eq:pfprob2} in nodal coordinates and the constrained flow problem \eqref{eq:pfangledifference} in edge coordinates.
\begin{proposition}[\textbf{KKT points in edge coordinates}]\label{prop:equiv}\phantom{a}
\begin{enumerate}[label=(\roman*)]
    \item For any $(\eta^\star, \lambda^\star) \in \mathcal{S}_{\eta}$, there exist $\eta^\star=VB^\mathsf{T} \theta^\star$ such that $(\theta^\star, \lambda^\star) \in \mathcal{S}_\theta$ if and only if $\eta^\star \in \Ima(B^\mathsf{T})$.
        \item $(\theta^\star, \lambda^\star) \in \mathcal{S}_\theta$ if and only if $(VB^\mathsf{T} \theta^\star, \lambda^\star) \in \mathcal{S}_{\eta}$.
\end{enumerate}
\end{proposition}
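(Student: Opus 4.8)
The plan is to prove both parts of Proposition~\ref{prop:equiv} by directly comparing the KKT systems in Definition~\ref{def:stheta} and Definition~\ref{def:seta}, exploiting the fact that the map $\theta \mapsto \eta = VB^\mathsf{T}\theta$ has image exactly $\Ima(VB^\mathsf{T})$ and that $V$ is invertible (since $W \succ 0$), so $\eta \in \Ima(VB^\mathsf{T})$ if and only if $V^{-1}\eta \in \Ima(B^\mathsf{T})$. Throughout, the key algebraic observation is that $L\theta = BV\eta$ by \eqref{eq:coordination_changing}, so the primal-feasibility conditions $P_\ell \le L\theta^\star + P_L \le P_u$ and $P_\ell \le BV\eta^\star + P_L \le P_u$, the complementary-slackness conditions, and the stationarity/optimality conditions \eqref{eq:KKT:nodal:optimality} and \eqref{eq:KKT:edge:optimality} are \emph{literally the same condition} once one substitutes $\eta^\star = VB^\mathsf{T}\theta^\star$ — note in particular that \eqref{eq:KKT:nodal:optimality} and \eqref{eq:KKT:edge:optimality} both read $M(BV\eta^\star + P_L - P^\star) + K_I(\lambda^\star_u - \lambda^\star_\ell) \in \ker(B^\mathsf{T})$ with $BV\eta^\star = L\theta^\star$. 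This immediately gives part (ii).

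For part (ii), I would argue as follows. $(\Rightarrow)$ Given $(\theta^\star,\lambda^\star)\in\mathcal S_\theta$, set $\eta^\star \coloneqq VB^\mathsf{T}\theta^\star$. Then $BV\eta^\star = BVVB^\mathsf{T}\theta^\star = L\theta^\star$, so every one of the three KKT conditions in Definition~\ref{def:seta} — primal feasibility, the two complementary-slackness equations, and the optimality inclusion — follows verbatim from the corresponding condition in Definition~\ref{def:stheta}; dual feasibility $(\lambda^\star_\ell,\lambda^\star_u)\in\mathbb R^{2n}_{\ge0}$ is unchanged. Hence $(VB^\mathsf{T}\theta^\star,\lambda^\star)\in\mathcal S_\eta$. $(\Leftarrow)$ Given $(VB^\mathsf{T}\theta^\star,\lambda^\star)\in\mathcal S_\eta$, the same substitution $BV(VB^\mathsf{T}\theta^\star) = L\theta^\star$ turns each edge-KKT condition back into the corresponding nodal-KKT condition, so $(\theta^\star,\lambda^\star)\in\mathcal S_\theta$. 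No feasibility assumption is needed here; this is a pure change-of-variables identity.

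For part (i), I would proceed in two directions. $(\Leftarrow)$ Suppose $(\eta^\star,\lambda^\star)\in\mathcal S_\eta$ and $\eta^\star \in \Ima(B^\mathsf{T})$; equivalently, since $V$ is invertible and $V\in\Ima(B^\mathsf{T})$ is not assumed here, one uses that $\eta^\star\in\Ima(B^\mathsf{T})$ together with the structural fact (to be used from the proof strategy discussion) that $\eta^\star = VB^\mathsf{T}\theta^\star$ for some $\theta^\star$; I would instead phrase it cleanly as: $\eta^\star\in\Ima(VB^\mathsf{T})$ iff there is $\theta^\star$ with $\eta^\star = VB^\mathsf{T}\theta^\star$, and I will need to reconcile this with the statement's "$\Ima(B^\mathsf{T})$" — here $\Ima(VB^\mathsf{T}) = \Ima(B^\mathsf{T})$ fails in general, so the intended reading must be $\eta^\star\in\Ima(VB^\mathsf{T})$ (equivalently $V^{-1}\eta^\star\in\Ima(B^\mathsf{T})$); I would note this and use the image of $VB^\mathsf{T}$. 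Given such a $\theta^\star$, part (ii) applied in reverse gives $(\theta^\star,\lambda^\star)\in\mathcal S_\theta$, proving the existence claim. $(\Rightarrow)$ Conversely, if some $\theta^\star$ with $\eta^\star = VB^\mathsf{T}\theta^\star$ satisfies $(\theta^\star,\lambda^\star)\in\mathcal S_\theta$, then by definition $\eta^\star = VB^\mathsf{T}\theta^\star \in \Ima(VB^\mathsf{T})$, i.e., $\eta^\star$ lies in the appropriate image space — which is the stated necessary condition.

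The main obstacle I anticipate is purely notational rather than mathematical: pinning down exactly which image space is meant in part (i) — $\Ima(B^\mathsf{T})$ versus $\Ima(VB^\mathsf{T})$ — and making sure the "only if" direction is non-vacuous, i.e., that not every $\eta^\star\in\mathbb R^e$ is realizable as $VB^\mathsf{T}\theta^\star$ when $\mc G$ has cycles (so that $\rank(B^\mathsf{T}) = n-1 < e$). Once that is fixed, the entire proposition reduces to the one-line identity $BV\cdot VB^\mathsf{T} = L$ together with invertibility of $V$, and both parts follow by inspection of the two KKT systems with no estimates or Lyapunov arguments required.
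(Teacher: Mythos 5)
Your proof is correct and follows essentially the same route as the paper's: both parts reduce to the identity $BV\cdot VB^\mathsf{T}=L$, under which the two KKT systems in Definition~\ref{def:stheta} and Definition~\ref{def:seta} coincide term by term, and part (i) is just the solvability condition for $\eta^\star=VB^\mathsf{T}\theta^\star$. Your observation that the realizable set is $\Ima(VB^\mathsf{T})$ rather than $\Ima(B^\mathsf{T})$ (these differ whenever $V$ does not preserve $\Ima(B^\mathsf{T})$) is a fair catch of a minor imprecision that the paper's own proof glosses over in the same way.
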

\begin{proof}
Note that $\theta^\star \in \mathbb{R}^n$ such that $\eta^\star = V B^\mathsf{T}\theta^\star$ exists if and only if $\eta^\star \in \Ima(B^\mathsf{T})$. Then, the first statement immediately follows by substituting $\theta^\star = V B^\mathsf{T}\eta^\star$ into the equations defining $\mathcal{S}_{\theta}$ and noting that $\ker(L)=\ker(B^\mathsf{T})$. To show the second statement, substitute $(\eta^\star, \lambda^\star) = (VB^\mathsf{T}\theta^\star, \lambda^\star)$ into \eqref{eq:KKT:edge}. Then, both $(\theta^\star, \lambda^\star) \in \mathcal{S}_\theta$ and $(VB^\mathsf{T} \theta^\star, \lambda^\star) \in \mathcal{S}_{\eta}$ hold if and only if $M(L \theta^\star \!+\! P_L \!+\!P^\star)\!+\! K_I (\lambda^\star_u\!-\!\lambda^\star_\ell) \in \ker(B^\mathsf{T})$.
\end{proof}
In other words, $\eta^\star \in \Ima (VB^\mathsf{T})$ ensures that the angle differences $\eta^\star \in \mathbb{R}^e$ are restricted to the set $\Ima(B^\mathsf{T})$ for which a corresponding angle configuration $\theta^\star \in \mathbb{R}^n$ can be found. Moreover, the sets of KKT points of \eqref{eq:pfprob2} and \eqref{eq:pfangledifference} coincide under the edge transformation $\eta = VB^\mathsf{T} \theta$.

\subsection{Networked dynamics in edge coordinates}
For simplicity of the notation, we first vectorize the networked dynamics \eqref{eq:plimdroopsc} in nodal coordinates to obtain
\begin{subequations}\label{eq:plimdroop}
    \begin{align}
        \ddt\theta =& M\left(P^\star-P_L-L \theta\right) - \left(\alpha \otimes K_I \right)\lambda \label{eq:plimdroop:theta}\\ 
            &-\left( \alpha \otimes K_P \right) \Pi_{{\mathbb{R}}^{2n}_{\geq 0} }\left(g(L \theta)\right), \nonumber \\
            \ddt\lambda =&  \Pi_{\mathcal{T}_{\lambda} \mathbb{R}^{2n}_{\geq 0}} \big(\left(I_2 \otimes K_{I}\right) g(L\theta)\big), \label{eq:plimdroop:lambda}
    \end{align}
\end{subequations}
where $\alpha \coloneqq (-1, 1)^\mathsf{T}$, $\lambda \coloneqq  (\lambda_\ell,\lambda_u) \in \mathbb{R}^{2n}_{\geq 0}$, and the function $g: \mathbb{R}^n \to \mathbb{R}^{2n}$ and network power injection $P_N \coloneqq L \theta$ are used to express the violation of the inequality constraints as
\begin{align*}
    g(P_N) \coloneqq \begin{bmatrix}
    P_{\ell} - P_N - P_L\\
    P_N + P_{L}-P_u
    \end{bmatrix}.
\end{align*}
Next, we will investigate stability of primal-dual gradient descent applied to the CFP in edge coordinates \eqref{eq:pfangledifference}. The augmented Lagrangian associated with \eqref{eq:pfangledifference} is given by
\begin{align*}
   \mathcal{L}(\eta, \lambda)\coloneqq& \tfrac{1}{2} \norm{BV \eta}^2_{M}+\left(P_L-P^\star\right)^{\mathsf{T}} M BV \eta \\
    +&\tfrac{1}{2} \norm{\Pi_{\mathbb{R}_{\geq 0}^{2n}} \left(g(BV\eta)\right)}^2_{I_2 \otimes K_P}  + \lambda^\mathsf{T} K_I  g(BV\eta),
    \end{align*}
Next, we introduce the primal-dual gradient dynamics $\ddt\eta = -\nabla_{\eta} \mathcal{L}$, $\ddt\lambda = \Pi_{\mathcal{T}_{\lambda} \mathbb{R}^{2n}_{\geq 0}} \left(\nabla_{\lambda}\mathcal{L}\right)$ associated with the augmented Lagrangian. This results in  
\begin{subequations}\label{eq:primaldualedge}
    \begin{align}
        \ddt\eta =&   VB^\mathsf{T} (M(P^\star-P_L - BV\eta) -  (\alpha \otimes K_I)\lambda,  \label{eq:primaldualedge:angle} \\
        & - \left(\alpha \otimes K_P\right) \Pi_{{\mathbb{R}}^{2n}_{\geq 0} }\left(g(BV\eta)\right)), \nonumber \\
        \ddt\lambda =&    \Pi_{\mathcal{T}_{\lambda} \mathbb{R}^{2n}_{\geq 0}} \left((I_2 \otimes K_I) g(BV\eta)\right).
    \end{align}
\end{subequations}
The next theorem shows the primal-dual dynamics \eqref{eq:primaldualedge} are globally asymptotically stable with respect the set of KKT points $\mc S_{\eta}$. In addition, we show that the primal-dual dynamics \eqref{eq:primaldualedge} converge to an equilibrium.

\begin{theorem}[\textbf{Global asymptotic stability of primal-dual dynamics in edge coordinates}]\label{thm:edgeconvergence}
    Consider $P_\ell$, $P_u$, $P_L$, and $P^\star$ such that Assumption~\ref{assum:feas} and Assumption~\ref{assum:setpoint} hold. Then the primal-dual dynamics \eqref{eq:primaldualedge} are globally asymptotically stable with respect to $\mc S_\eta$ on $\mathbb{R}^{e} \times \mathbb{R}^{2n}_{\geq0}$. Moreover, $\ddt (\eta,\lambda) = \mathbbl{0}_{e+2n}$ holds on $\mc S_\eta$. 
\end{theorem}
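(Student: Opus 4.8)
The plan is to view \eqref{eq:primaldualedge} as the standard primal-dual gradient dynamics (with projected dual flow) of a convex program, to decompose it into an asymptotically stable part and a trivial null direction, and to invoke the strictly-convex convergence result of~\cite{ashish}. First I would record the structural facts. Adding the exterior penalty $\mathrm{pen}(\eta)\coloneqq\tfrac12\norm{\Pi_{\mathbb{R}^{2n}_{\geq 0}}(g(BV\eta))}^2_{I_2\otimes K_P}$ to the objective of \eqref{eq:pfangledifference} yields a convex program whose objective is convex and $C^1$ (the penalty is a composition of the componentwise $\max(\cdot,0)^2$ with an affine map, is nonnegative, and vanishes together with its gradient on the feasible set), whose constraints are affine, whose ordinary Lagrangian is exactly $\mathcal{L}(\eta,\lambda)$, and whose primal-dual gradient dynamics---with the dual flow projected onto $\mathbb{R}^{2n}_{\geq 0}$---are exactly \eqref{eq:primaldualedge}. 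By Proposition~\ref{prop:feasdiff} and Assumption~\ref{assum:feas}, Slater's condition holds, so strong duality holds and the set of saddle points of $\mathcal{L}$ on $\mathbb{R}^{e}\times\mathbb{R}^{2n}_{\geq 0}$ is nonempty and equals the product of the (closed, convex) primal- and dual-optimal sets, hence is closed and convex; moreover it coincides with $\mathcal{S}_\eta$ of Definition~\ref{def:seta}, since at a feasible point the penalty gradient vanishes so that $\nabla_\eta\mathcal{L}=VB^\mathsf{T}\big(M(BV\eta+P_L-P^\star)+K_I(\lambda_u-\lambda_\ell)\big)$ and $\ker(VB^\mathsf{T})=\ker(B^\mathsf{T})$ (as $V$ is invertible) identifies stationarity with \eqref{eq:KKT:edge:optimality}, while the dual saddle inequality is primal feasibility plus complementary slackness. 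Finally, since \eqref{eq:primaldualedge} has a discontinuous right-hand side I would note that it is a well-posed projected dynamical system~\cite{DN1993,HBD2021}: Carath\'eodory solutions exist, are unique, and $\mathbb{R}^{2n}_{\geq 0}$ is forward invariant for $\lambda$.

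Second, I would verify $\ddt(\eta,\lambda)=\mathbbl{0}_{e+2n}$ on $\mathcal{S}_\eta$ (which also shows the equilibria of \eqref{eq:primaldualedge} in $\mathbb{R}^{e}\times\mathbb{R}^{2n}_{\geq 0}$ are precisely the points of $\mathcal{S}_\eta$). At $(\eta^\star,\lambda^\star)\in\mathcal{S}_\eta$, primal feasibility gives $g(BV\eta^\star)\leq\mathbbl{0}_{2n}$, hence $\Pi_{\mathbb{R}^{2n}_{\geq 0}}(g(BV\eta^\star))=\mathbbl{0}_{2n}$, and \eqref{eq:KKT:edge:optimality} (so that $B^\mathsf{T}$, and therefore $VB^\mathsf{T}$, annihilates the bracketed term) gives $\ddt\eta=-\nabla_\eta\mathcal{L}=\mathbbl{0}_{e}$. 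For the dual flow, componentwise complementary slackness together with $g(BV\eta^\star)\leq\mathbbl{0}_{2n}$ and $\lambda^\star\geq\mathbbl{0}_{2n}$ imply that each entry of $(I_2\otimes K_I)g(BV\eta^\star)$ is zero where $\lambda^\star_j>0$ and non-positive where $\lambda^\star_j=0$ (in which case $\mathcal{T}_{\lambda^\star_j}\mathbb{R}_{\geq 0}=\mathbb{R}_{\geq 0}$), so its projection onto the tangent cone vanishes and $\ddt\lambda=\mathbbl{0}_{2n}$.

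Third is the convergence argument. Since $\ddt\eta\in\Ima(VB^\mathsf{T})$ at all times, I would decompose $\eta=\eta_\perp+\eta_\parallel$ with $\eta_\perp\in\ker(BV)$ constant along solutions (a trivially Lyapunov-stable null direction) and $\eta_\parallel\in\Ima(VB^\mathsf{T})$; the $(\eta_\parallel,\lambda)$-subsystem is the primal-dual dynamics of the program restricted to the affine slice $\eta_\perp+\Ima(VB^\mathsf{T})$, on which the objective is \emph{strictly} convex because $\eta^\mathsf{T}VB^\mathsf{T}MBV\eta=\norm{BV\eta}^2_{M}>0$ for $\eta\in\Ima(VB^\mathsf{T})\setminus\{\mathbbl{0}_e\}$. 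Writing $z\coloneqq(\eta,\lambda)$ and $V(z)=\tfrac12\norm{z-z^\star}^2$ for a fixed $z^\star\in\mathcal{S}_\eta$, convexity of the objective, affineness of the constraints, the saddle inequalities of $\mathcal{L}$, and the obtuse-angle property of the dual projection give $\dot V\leq 0$ along Carath\'eodory solutions, hence boundedness of trajectories; the strictly-convex convergence result of~\cite{ashish} applied to the $(\eta_\parallel,\lambda)$-subsystem then gives $(\eta_\parallel(t),\lambda(t))\to(\eta_\parallel^\star,\lambda^\star)$, so together with the frozen $\eta_\perp$ we obtain $\varphi_z(t,z_0)\to(\eta_\perp+\eta_\parallel^\star,\lambda^\star)\in\mathcal{S}_\eta$, i.e., convergence to an equilibrium. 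Lyapunov stability with respect to $\mathcal{S}_\eta$ follows by centering $V$ at $\Pi_{\mathcal{S}_\eta}(z_0)$: then $\norm{\varphi_z(t,z_0)}_{\mathcal{S}_\eta}\leq\norm{\varphi_z(t,z_0)-\Pi_{\mathcal{S}_\eta}(z_0)}\leq\norm{z_0-\Pi_{\mathcal{S}_\eta}(z_0)}=\norm{z_0}_{\mathcal{S}_\eta}$, so $\delta=\varepsilon$ works. Combining attractivity and Lyapunov stability yields global asymptotic stability with respect to $\mathcal{S}_\eta$ on $\mathbb{R}^{e}\times\mathbb{R}^{2n}_{\geq 0}$.

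The step I expect to be the main obstacle is this convergence argument, precisely because \eqref{eq:pfangledifference} fails to be strictly convex when $\mc G$ contains cycles: the edge-Laplacian-type Hessian $VB^\mathsf{T}MBV$ is then only positive semidefinite, so one cannot apply the strictly-convex result of~\cite{ashish} to the full $\eta$-dynamics and must first peel off the $\ker(BV)$-direction (the ``second change of coordinates'' of the proof strategy, corresponding to invariance under rotation in the power-system interpretation). Making this rigorous---verifying that the $(\eta_\parallel,\lambda)$-subsystem really is a closed, well-posed projected dynamical system satisfying the hypotheses of~\cite{ashish}, that the $\dot V\leq 0$ estimate and the LaSalle-type invariance argument survive the projection in the dual flow, and that the resulting limit point lies in $\mathcal{S}_\eta$ rather than merely a superset---is the technical crux; it is exactly the degeneracy of the edge Laplacian $VB^\mathsf{T}MBV$ noted after \eqref{eq:pfangledifference}.
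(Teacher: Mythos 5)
Your proposal is correct and follows essentially the same route as the paper: peel off the frozen $\ker(BV)$ direction (your $\eta_\perp$ is exactly the paper's $\Gamma_0\gamma_0$ coordinate, since $\Gamma_0$ spans $\ker(VB^\mathsf{T}MBV)=\ker(BV)$), observe that the remaining subsystem is the primal-dual dynamics of a strictly convex, Slater-feasible program, and invoke \cite[Theorem~4.5]{ashish}. Your additional explicit verification of $\ddt(\eta,\lambda)=\mathbbl{0}_{e+2n}$ on $\mc S_\eta$ and the centering-at-$\Pi_{\mc S_\eta}(z_0)$ argument for Lyapunov stability with respect to the set are sound details that the paper leaves implicit in its citation of \cite{ashish}.
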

\begin{proof} We begin by noting that $M \in \mathbb{S}^n_{\succ 0}$. Then, by \cite[Observation~7.1.8]{HJ2013}, $B^\mathsf{T} M B \in \mathbb{S}^{e \times e}_{\succ 0}$ if and only if $\rank{B}=e$. If $\mc G$ is a connected tree, then $n=e+1$ and by \cite[Lemma 9.2]{LNS}, $\rank{B}=e$. Conversely, if $\mc G$ contains cycles, then $e \geq n$ and $\rank{B} \leq e-1$. Thus, if $\mc G$ is a tree, then the cost function of \eqref{eq:pfangledifference} is strictly convex and $\mc S_\eta$ is a singleton. Moreover, by Proposition~\ref{prop:feasdiff} there exists $\eta$ such that $P_\ell < B V \eta + P_L < P_u$, i.e., Slater's condition holds. Then, \cite[Theorem~4.5]{ashish} immediately implies that \eqref{eq:primaldualedge} is globally asymptotically stable with respect to $\mc S_\eta$. 
        
When $\mc G$ contains cycles, we can decompose \eqref{eq:pfangledifference} and \eqref{eq:primaldualedge} into a strictly convex part and remaining dynamics. To this end, let $\Gamma \coloneqq \begin{bmatrix} \Gamma_{+} & \Gamma_{0} \end{bmatrix}$ where $\Gamma_{+}  \in \mathbb{R}^{e \times n-1}$ contains eigenvectors corresponding to the positive eigenvalues of $V B^\mathsf{T} M B V$ and $\Gamma_{0}  \in \mathbb{R}^{e \times e-(n-1)}$ contains the eigenvectors corresponding to the zero eigenvalues. Next, let $\gamma = (\gamma_+,\gamma_0) \in \mathbb{R}^{e}$. Since $B^\mathsf{T} M B \in \mathbb{S}^n_{\succeq 0}$, we conclude that $\Gamma^{-1} =  \Gamma^\mathsf{T}$. Applying the change of coordinates $\eta = \Gamma \gamma$ to \eqref{eq:pfangledifference} results in
\begin{subequations}\label{eq:pfangledifferenceplus}
    \begin{align}
        &\min_\eta \tfrac{1}{2} \norm{\gamma_+}^2_{H} + c^\mathsf{T} \gamma_+ \\ 
        & \text{s.t. }   K_I P_{\ell} \leq K_I (A \gamma_+ + P_L)  \leq K_I  P_u,
    \end{align}
\end{subequations}
where $H\coloneqq\Gamma_+^\mathsf{T} VB^\mathsf{T}MBV \Gamma_+$, $c\coloneqq\Gamma_+^\mathsf{T}VBM(P^\star\!-\!P_L)$,  and $A\coloneqq BV \Gamma_+$. Notably, this transformation only removed redundant degrees of freedom and, by construction, \eqref{eq:pfangledifferenceplus} is strictly convex and strictly feasible under the same conditions as \eqref{eq:pfangledifference}. Moreover, given a KKT point $(\gamma^\star_+,\lambda^\star)$ of \eqref{eq:pfangledifferenceplus}, $BV \Gamma_0 \in \mathbb{R}^{n \times e-(n-1)}$ implies that $(\Gamma_+ \gamma^\star_+ + \Gamma_0 \gamma_0,\lambda^\star) \in \mc S_\eta$ for all $\gamma_0 \in \mathbb{R}^{e-(n-1)}$. Applying the change of coordinates $\eta = \Gamma \gamma$ to \eqref{eq:primaldualedge} results in $\ddt \gamma_0 =0$ and 
\begin{subequations}\label{eq:primaldualedgeplus}
\begin{align}
    \!\!\ddt \gamma_+=&-\!H \gamma_+ \!-\!c \!-\! \Gamma_+^\mathsf{T} \big( (\alpha \otimes K_{P}) \Pi_{\mathbb{R}^{2n}_{\geq 0}}(g(A \gamma_+)) \nonumber\\
    & +  (\alpha \otimes K_I) \lambda \big),  \\ 
    \ddt \lambda =& \Pi_{\mathcal{T}_{\lambda}{\mathbb{R}^{2n}_{\geq 0} }}\left(K_I g(A\gamma_+)\right).
\end{align}
\end{subequations}
Notably, \eqref{eq:primaldualedgeplus} corresponds to primal-dual dynamics of the augmented Lagrangian of \eqref{eq:pfangledifferenceplus}. Thus, by \cite[Theorem~4.5]{ashish}, the dynamics \eqref{eq:primaldualedgeplus} are globally asymptotically stable with respect to a KKT point $(\gamma^\star_+,\lambda^\star)$ of \eqref{eq:pfangledifferenceplus}. In other words, \eqref{eq:primaldualedge} can be decomposed into dynamics that are globally asymptotically stable with respect to $(\gamma^\star_+,\lambda^\star)$ and a constant $\gamma_0$. Since $(\eta,\lambda) = (\Gamma_+ \gamma_+ + \Gamma_0 \gamma_0,\lambda) \in \mc S_\eta$ for any $\gamma_0 \in \mathbb{R}^{e \times e-(n-1)}$, it follows that \eqref{eq:primaldualedge} is globally asymptotically stable with respect to $\mc S_\eta$. The last statement of the Theorem follows by noting that $\ddt (\gamma_+,\lambda)=\mathbbl{0}_{3n-1}$ when $(\gamma_+,\lambda)=(\gamma^\star_+,\lambda^\star)$ and $\ddt \gamma_0=0$.
\end{proof}
The following corollary is a direct consequence of the proof of Theorem~\ref{thm:edgeconvergence} and establishes that the optimizer of \eqref{eq:pfangledifference} is unique if the graph $\mc G$ is a tree.
\begin{corollary}[\textbf{Radial network}]\label{corr:radial}
    Consider $P_\ell$, $P_u$, $P_L$, and $P^\star$ such that Assumption~\ref{assum:feas} and Assumption~\ref{assum:setpoint} hold. If $\mc G$ is a tree, then \eqref{eq:primaldualedge} is globally asymptotically stable with respect to the unique optimizer of \eqref{eq:pfangledifference}, i.e., $\mc S_\eta$ is a singleton.
\end{corollary}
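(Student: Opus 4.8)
The plan is to read the corollary off from the tree case already contained in the proof of Theorem~\ref{thm:edgeconvergence}, the only substantive point being that the KKT set $\mc S_\eta$ collapses to a single point when $\mc G$ is a tree. First I would recall that a connected tree has $n=e+1$ and, by \cite[Lemma~9.2]{LNS}, $\rank B=e$; since $V=W^{1/2}\in\mathbb{R}^{e\times e}$ is invertible, $BV$ then also has full column rank $e$, so $VB^\mathsf{T}MBV\in\mathbb{S}^e_{\succ 0}$ by \cite[Observation~7.1.8]{HJ2013}. Hence the objective of \eqref{eq:pfangledifference} is strictly convex, and because its feasible set is nonempty by Proposition~\ref{prop:feasdiff}, closed, and convex, \eqref{eq:pfangledifference} has a unique primal minimizer $\eta^\star$.

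Second, I would establish uniqueness of the associated multiplier $\lambda^\star$ so that $\mc S_\eta$ is genuinely a singleton. By Proposition~\ref{prop:feasdiff} Slater's condition holds, so the KKT conditions are necessary and sufficient for optimality, and it remains to check LICQ at $\eta^\star$. Up to nonzero scaling, the active constraint gradients in $\eta$ are rows of $BV$ indexed by the active set, i.e., $\{B_iV : i\in\mc I_\ell\cup\mc I_u\}$ where $B_i$ is the $i$-th row of $B$. By Proposition~\ref{prop:mut:activesets} (whose proof applies verbatim to $\mc S_\eta$), $\mc I_\ell=\emptyset$ or $\mc I_u=\emptyset$, and Assumption~\ref{assum:feas} together with $\mathbbl{1}_n^\mathsf{T}BV=0$ rules out the active set equaling $\mc N$; hence the active set has cardinality at most $n-1$. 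Since any at most $n-1$ rows of the incidence matrix of a tree are linearly independent and $V$ is invertible, the active constraint gradients are linearly independent, LICQ holds, and $\lambda^\star$ is unique. Therefore $\mc S_\eta=\{(\eta^\star,\lambda^\star)\}$.

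Finally, global asymptotic stability of \eqref{eq:primaldualedge} with respect to $\mc S_\eta$ is precisely the tree branch of Theorem~\ref{thm:edgeconvergence} (which in turn invokes \cite[Theorem~4.5]{ashish}), and since $\mc S_\eta$ is now known to be a singleton, this coincides with global asymptotic stability with respect to the unique optimizer of \eqref{eq:pfangledifference}. I do not anticipate a real obstacle here: primal uniqueness is immediate from strict convexity, and the only mildly technical ingredient is the LICQ/dual-uniqueness argument above, which could alternatively be absorbed by simply citing the assertion ``$\mc S_\eta$ is a singleton'' already made inside the proof of Theorem~\ref{thm:edgeconvergence}.
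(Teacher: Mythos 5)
Your proposal follows essentially the same route as the paper: Corollary~\ref{corr:radial} is read off from the tree branch of the proof of Theorem~\ref{thm:edgeconvergence}, where $n=e+1$ and $\rank B=e$ give $VB^\mathsf{T}MBV\in\mathbb{S}^e_{\succ 0}$, hence strict convexity of \eqref{eq:pfangledifference} and a unique primal optimizer, and Slater's condition from Proposition~\ref{prop:feasdiff} lets \cite[Theorem~4.5]{ashish} deliver global asymptotic stability. Your additional LICQ argument for uniqueness of the multipliers is correct and in fact patches a point the paper leaves implicit, since strict convexity alone only pins down $\eta^\star$ and not $(\lambda^\star_\ell,\lambda^\star_u)$.
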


To analyze the networked dynamics in edge coordinates, let $C_\eta \coloneqq \begin{bmatrix} I_e & \mathbbl{0}_{e \times 2n} \end{bmatrix}$ and $T_\eta \coloneqq \blkdiag(VB^\mathsf{T},I_{2n})$.

\begin{lemma}[\textbf{Coinciding vector fields}]\label{lem:coincide}
    Let $\varphi_{\theta}(t,(\theta_0,\lambda_0))$ and $\varphi_{\eta}(t,(\eta_0,\lambda_0))$ denote the solutions of \eqref{eq:plimdroop} and \eqref{eq:primaldualedge} for initial conditions $(\theta_0,\lambda_0)$ and $(\eta_0,\lambda_0)$. Then, it holds that
    \begin{enumerate}[label=(\roman*)]
        \item $\eta(t)=C_\eta \varphi_{\eta}(t,(\eta_0,\lambda_0)) \in \Ima(VB^\mathsf{T})$ for all $t \in \mathbb{R}_{\geq0}$ and all $\eta_0\in \Ima(VB^\mathsf{T})$, and \label{lem:coincide:invariant}
        \item $T_\eta \varphi_{\theta}(t,(\theta_0,\lambda_0))=\varphi_{\eta}(t,T_\eta(\theta_0,\lambda_0))$. \label{lem:coincide:solutions}
    \end{enumerate} 
\end{lemma}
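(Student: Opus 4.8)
The plan is to establish both claims essentially by inspecting the two vector fields and exploiting the factorization $L = BVVB^\mathsf{T}$ from \eqref{eq:coordination_changing}; no serious machinery beyond uniqueness of projected-system solutions is needed.

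For part \ref{lem:coincide:invariant}, I would observe that for every $(\eta,\lambda)$ the right-hand side of the $\eta$-dynamics \eqref{eq:primaldualedge:angle} is a vector premultiplied by $VB^\mathsf{T}$ and therefore lies in $\Ima(VB^\mathsf{T})$, while the $\lambda$-dynamics do not influence $\eta$. Since a Carath\'eodory solution is absolutely continuous, $C_\eta \varphi_\eta(t,(\eta_0,\lambda_0)) = \eta_0 + \int_0^t \tfrac{\diff}{\diff \!s}\eta(s)\,\diff s$ with $\tfrac{\diff}{\diff \!s}\eta(s) \in \Ima(VB^\mathsf{T})$ for almost every $s$. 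Because $\Ima(VB^\mathsf{T})$ is a linear subspace, hence closed under sums and integration, $\eta_0 \in \Ima(VB^\mathsf{T})$ forces $\eta(t) \in \Ima(VB^\mathsf{T})$ for all $t \in \mathbb{R}_{\geq 0}$.

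For part \ref{lem:coincide:solutions}, I would take a solution $(\theta(t),\lambda(t)) = \varphi_\theta(t,(\theta_0,\lambda_0))$ of \eqref{eq:plimdroop}, set $\eta(t) \coloneqq VB^\mathsf{T}\theta(t)$ so that $(\eta(0),\lambda(0)) = T_\eta(\theta_0,\lambda_0)$, and verify that $(\eta,\lambda)$ solves \eqref{eq:primaldualedge}. Differentiating and using \eqref{eq:plimdroop:theta} gives $\ddt\eta = VB^\mathsf{T}\ddt\theta$, i.e., the $VB^\mathsf{T}$-image of the right-hand side of \eqref{eq:plimdroop:theta}; replacing every occurrence of $L\theta$ by $BV\eta$ via \eqref{eq:coordination_changing} turns this expression verbatim into \eqref{eq:primaldualedge:angle}, and the same substitution turns $g(L\theta)$ into $g(BV\eta)$, so that \eqref{eq:plimdroop:lambda} becomes the $\lambda$-equation of \eqref{eq:primaldualedge}. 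Hence $t \mapsto T_\eta\varphi_\theta(t,(\theta_0,\lambda_0))$ is a Carath\'eodory solution of \eqref{eq:primaldualedge} issued from $T_\eta(\theta_0,\lambda_0)$; invoking uniqueness of Carath\'eodory solutions of the projected dynamics then yields $T_\eta\varphi_\theta(t,(\theta_0,\lambda_0)) = \varphi_\eta(t,T_\eta(\theta_0,\lambda_0))$.

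The two substitutions are routine; the point that needs care is the well-posedness underlying the pushforward argument in part \ref{lem:coincide:solutions}: we need existence and uniqueness of Carath\'eodory solutions of both projected systems so that $\varphi_\eta(t,\cdot)$ is unambiguous and any solution obtained by transport must coincide with it. I expect this to be the main, though standard, obstacle; it is discharged by checking the hypotheses of the projected-dynamical-systems existence/uniqueness results (e.g., \cite{DN1993,HBD2021}) --- namely global Lipschitz continuity of the pre-projection vector fields, which follows since $g$ is affine and $\Pi_{\mathbb{R}^{2n}_{\geq 0}}$ is nonexpansive, together with nonemptiness, closedness, and convexity of $\mathbb{R}^{2n}_{\geq 0}$ and of the associated tangent cones.
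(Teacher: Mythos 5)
Your proposal is correct and follows essentially the same route as the paper: part \ref{lem:coincide:invariant} via the observation that $\ddt\eta$ always lies in the subspace $\Ima(VB^\mathsf{T})$, and part \ref{lem:coincide:solutions} by substituting $L\theta = BV\eta$ to show the two vector fields coincide under $T_\eta$. You are in fact slightly more careful than the paper's proof, which stops at "the vector fields coincide" without explicitly invoking existence/uniqueness of Carath\'eodory solutions of the projected systems to pass from coinciding vector fields to coinciding flows — a step your appeal to Lipschitz continuity of the pre-projection fields and convexity of $\mathbb{R}^{2n}_{\geq 0}$ properly discharges.
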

\begin{proof}
We first note that $\ddt \eta \in \Ima(VB^\mathsf{T})$ in \eqref{eq:primaldualedge}. Therefore, for all $\eta_0 \in \Ima(VB^\mathsf{T})$, it holds that $C_\eta \varphi_\eta(t,(\eta_0,\lambda_0)) \in \Ima(VB^\mathsf{T})$ for all $t \in \mathbb{R}_{\geq 0}$. To show statement~\ref{lem:coincide:solutions}, let 
\begin{align*}
    f(\eta, \lambda) \coloneqq& M\left(P^\star-P_L-B V \eta\right)- (\alpha \otimes K_I) \lambda    \\
    & -(\alpha \otimes K_P)\Pi_{\mathbb{R}_{\geq 0}^n}\left(g(BV\eta)\right).
\end{align*}
Then \eqref{eq:plimdroop} and \eqref{eq:primaldualedge} can be written as
\begin{align}\label{eq:sigma_eta}
        VB^\mathsf{T} \ddt \theta = VB^\mathsf{T} f \big(\underbrace{VB^\mathsf{T}\theta}_{=\eta}, \lambda\big) = 
        \ddt \eta, \\
        \ddt \lambda = \Pi_{\mathcal{T}_{\lambda} \mathbb{R}^{2n}_{\geq 0}}\bigg((I_2 \otimes K_I) g\big(BV\underbrace{VB^\mathsf{T}\theta}_{=\eta}\big)\bigg).
\end{align}
In other words, the vector fields of \eqref{eq:plimdroop} and \eqref{eq:primaldualedge} coincide mapped to the edge coordinates in the sense of statement~\ref{lem:coincide:solutions} (i.e., by multiplying \eqref{eq:plimdroop} with $T_\eta$ from the left) when $\eta \in \Ima(VB^\mathsf{T})$.
\end{proof}
In other words, when starting from an initial condition such that $\eta_0 \in \Ima(VB^\mathsf{T})$, the dynamics \eqref{eq:primaldualedge} coincide with the dynamics \eqref{eq:plimdroop} mapped to the edge coordinates.

\subsection{Proof of the main results}
We are now ready to prove our main result. In particular, we leverage Theorem~\ref{thm:edgeconvergence} to establish stability of the overall networked dynamics with respect to the set of KKT points $\mc S_\eta$ of the CFP \eqref{eq:pfprob2}. Notably, by applying the two changes of coordinates shown in the top half of Fig.~\ref{fig:strategy}, the CFP \eqref{eq:pfprob2} can be decomposed into a strictly convex and convex problem. Theorem~\ref{thm:edgeconvergence} establishes global asymptotic stability of the dynamics in edge-coordinates associated with the strictly convex part of the CFP. The proof of Theorem~\ref{thm:GASpowerlimit} builds on this result and (i) accounts for the remaining Lyapunov stable dynamics in edge coordinates, and (ii) establishes stability in the original coordinates.

\textit{Proof of Theorem~\ref{thm:GASpowerlimit}:}
First, we establish that there exist $\underline{\kappa} \in \mathbb{R}_{\geq 0}$ and $\overline{\kappa} \in \mathbb{R}_{\geq 0}$, $\forall t\geq 0$
 such that  
\begin{align}\label{eq:normbound}
    \underline{\kappa} \norm{T_\eta \varphi_\theta(t,\xi_0)}_{\mc S_\eta} \!\leq\! \norm{\varphi_\theta(t,\xi_0)}_{\mc S_\theta}  \!\leq\! \overline{\kappa} \norm{T_\eta \varphi_\theta(t,\xi_0)}_{\mc S_\eta}
\end{align}
holds for all $\xi_0 \in \mathbb{R}^n \times \mathbb{R}^{2n}_{>0}$. To this end, we note that 
\begin{align}
    \norm{T_\eta (\theta,\lambda)}_{\mc S_\eta}&=\min\nolimits_{(\eta^\prime,\lambda^\prime)\in\mc S_\eta} \norm{(\eta^\prime -VB^\mathsf{T}\theta,\lambda^\prime-\lambda)},\label{eq:ps:eta}\\
    \norm{(\theta,\lambda)}_{\mc S_\theta}&=\min\nolimits_{(\theta^\prime, \lambda^\prime) \in \mc S_\theta} \norm{(\theta^\prime -\theta,\lambda^\prime-\lambda)} \label{eq:ps:theta}
\end{align}
and let $(\theta^{\prime\star},\lambda^{\prime\star}) \in \mc S_\theta$ denote the (unique) optimizer of \eqref{eq:ps:theta}. Note that $(VB^\mathsf{T}\theta^{\prime\star},\lambda^{\prime\star}) \in \mc S_\eta$, by Cauchy–Schwarz inequality, it follows that
\begin{align*}
\norm{T_\eta(\theta^{\prime\star} - \theta,\lambda^{\prime\star}-\lambda)}\leq \norm{T_\eta} \norm{(\theta^{\prime\star} - \theta,\lambda^{\prime\star}-\lambda)},
\end{align*}
i.e., any optimizer of \eqref{eq:ps:theta} can be used to upper bound \eqref{eq:ps:eta} in terms of \eqref{eq:ps:theta}. It immediately follows that the first inequality in \eqref{eq:normbound} holds with $\underline{\kappa}=\norm{T_\eta}^{-1} \in \mathbb{R}_{>0}$. Next, let 
\begin{align*}
    \sigma_{\mathbbl{1}}\coloneqq \min\nolimits_{\theta \perp \mathbbl{1}_n, \norm{\theta}=1} \norm{VB^\mathsf{T}\theta} \in \mathbb{R}_{>0},
\end{align*}
and decompose $\eta^\prime= VB^\mathsf{T} \beta^\prime + z^\prime \in \mathbb{R}^e$ into $\beta^\prime \in \mathbb{R}^n$ and $z^\prime \perp \Ima(VB^\mathsf{T})$, i.e., $z^\prime \in \ker(BV)$. Then, $\norm{T_\eta ({\theta},\lambda)}_{\mc S_\eta}$ can be written as %
\begin{align}
    &\min_{(\beta^\prime,z^\prime,\lambda^\prime): (VB^\mathsf{T}\beta^\prime+z^\prime,\lambda^\prime)\in\mc S_\eta} \norm{(VB^\mathsf{T}(\beta^\prime-\theta)+z^\prime,\lambda^\prime-\lambda)}, \nonumber\\
    &= \min_{(\beta^\prime,\lambda^\prime): (VB^\mathsf{T}\beta^\prime,\lambda^\prime)\in\mc S_\eta} \norm{(VB^\mathsf{T}(\beta^\prime-\theta),\lambda^\prime-\lambda)}, \label{eq:ps:betaz}
\end{align}
where we used that
\begin{align*}
    \norm{VB^\mathsf{T} y+z^\prime} = \sqrt{y^\mathsf{T} L y+2 y^\mathsf{T} BV z^\prime + (z^\prime)^\mathsf{T} z^\prime},
\end{align*}
with $y=\beta^\prime-\theta$ and $BV z^\prime = \mathbbl{0}_n$ to conclude that the minimum of \eqref{eq:ps:betaz} is attained at $z^\prime = \mathbbl{0}_e$. In other words, for any $(\beta^\prime, \lambda^\prime)$ such that $(VB^\mathsf{T}\beta^\prime,\lambda^\prime)\in\mc S_\eta$ we obtain
\begin{align*}
    \min_{\beta^\prime,\lambda^\prime} \norm{\!\begin{bmatrix}VB^\mathsf{T}(\beta^\prime-\theta) \\ \lambda^\prime-\lambda\end{bmatrix}\!} &\geq \min_{\beta^\prime,\lambda^\prime} \norm{\!\begin{bmatrix} \sigma_\mathbbl{1} I_n & \mathbbl{0}_{n \times 2n} \\ \mathbbl{0}_{2n \times n} & I_{2n} \end{bmatrix} \begin{bmatrix} \beta^\prime-\theta \\ \lambda^\prime-\lambda \end{bmatrix}\!} \\
     &\geq \min\{\sigma_{\mathbbl{1}}, 1\} \min_{\beta^\prime,\lambda^\prime} \norm{\! \begin{bmatrix} \beta^\prime-\theta \\ \lambda^\prime-\lambda \end{bmatrix}\!} 
\end{align*}
and \eqref{eq:normbound} holds with $\overline{\kappa}= \frac{1}{\min\{\sigma_\mathbbl{1}, 1\}} \in \mathbb{R}_{>0}$.
By Theorem~\ref{thm:edgeconvergence}, \eqref{eq:primaldualedge} is GAS on $\mathbb{R}^{e} \times \mathbb{R}^{2n}_{\geq0}$ with respect to $\mc S_\eta$. In other words, there exists $\chi_\eta \in \mathscr{KL}$ such that
\begin{align*}
\norm{\varphi_\eta(t,T_\eta \xi_0)}_{\mc S_\eta} \leq \chi_\eta({\norm{T_\eta \xi_0}_{\mc S_\eta}, t})
\end{align*}
for all $\xi_0 \in \mathbb{R}^n \times \mathbb{R}^{2n}_{>0}$ and all $t \in \mathbb{R}_{\geq 0}$. Using Lemma~\ref{lem:coincide}, \eqref{eq:normbound}, and $\chi_\eta({\norm{T_\eta \xi_0}_{\mc S_\eta}, t}) \leq \chi_\eta({\underline{\kappa}^{-1}\norm{T_\eta \xi_0}_{\mc S_\theta}, t})$, we obtain
\begin{align*}
    \norm{\varphi_\theta(t,\xi_0)}_{\mc S_\theta} \leq \overline{\kappa}  \norm{\varphi_\eta(t,T_\eta \xi_0)}_{\mc S_\eta} \leq \overline{\kappa} \chi_\eta({\underline{\kappa}^{-1} \norm{T_\eta \xi_0}_{\mc S_\theta}, t}     )
\end{align*}
for all $\xi_0 \in \mathbb{R}^n \times \mathbb{R}^{2n}_{>0}$ and all $t \in \mathbb{R}_{\geq 0}$. In other words, \eqref{eq:plimdroop} is globally asymptotically stable on $\mathbb{R}^n \times \mathbb{R}^{2n}_{\geq 0}$ with respect to the set  $\mathcal{S}_{\theta}$. Finally, we show that $\lim_{t \to \infty} \omega(t) = \mathbbl{1}_n \omega_{s}$. According to Theorem~\ref{thm:edgeconvergence}, any pair $\left(\eta^\star, \lambda^\star\right)$ converges to a KKT point $\left(\eta^\star, \lambda^\star \right) \in \mc S_\eta$, i.e., $\lim_{t \to \infty} \eta(t) = \eta^\star$ and $\lim_{t \to \infty} \ddt \eta=\mathbbl{0}_e$. Using $\eta = V B^\mathsf{T} \theta$, we obtain
\begin{align*}
   \lim_{t \to \infty} \ddt V B^\mathsf{T} \theta(t) = \lim_{t \to \infty} V B^\mathsf{T} \ddt \theta(t) = \lim_{t \to \infty} VB^\mathsf{T} \omega(t) = 0
\end{align*}
and $\lim_{t \to \infty} \omega(t) = \mathbbl{1}_n \omega_{s}$ follows $\ker(B^\mathsf{T})=\vspan(\mathbbl{1}_n)$.
\hfill $\square$


\section{Application to Frequency Dynamics of Multi-Converter Power Systems}\label{sec:application}
\subsection{Converter and network model}
In the context of the multi-converter power systems, the nodal variable $\theta \in \mathbb{R}^n$ models voltage phase angles of each power converter and $P_L$ models local load of each converter. Throughout this work, we assume that the power network is lossless and modeled through a Kron-reduced graph~\cite{DB2013} and its so-called dc power flow \eqref{eq:dcpfeq}. We note that the CFP \eqref{eq:deCFP}  corresponds to an optimal dc power flow problem that seeks voltage phase angles $\theta^\star \in \mathbb{R}^n$ that minimize deviations from the power setpoint $P^\star \in \mathbb{R}^n$ subject to converter power limits $P_\ell \in \mathbb{R}^n$ and $P_u\in \mathbb{R}^n$ and supplying the load $P_L$.

\subsection{Review of power-limiting droop control}
Grid-forming power-limiting droop control~\cite{PL2006,DLK2019} as shown in Fig.~\ref{fig:plim} uses a measurement of the converter power injection $P_i \in \mathbb{R}$ to determine the frequency $\omega_i = \ddt \theta_i \in \mathbb{R}$ (relative to the nominal frequency $\omega_0$) of the ac voltage imposed by the converter at its bus $i \in \mc N$. Notably, power-limiting droop control combines the widely studied (proportional) $P-f$ droop control~\cite{CDA1993,RLB2012,SDB2013} with (nonlinear) PI controls that aim to maintain converter power injection $P_i \in \mathbb{R}$ within lower and upper limits $P_{\ell, i} \in \mathbb{R}$ and $P_{u, i} \in \mathbb{R}$. 
\begin{figure}[htbp]
    \begin{center}
        \includegraphics[width=1\columnwidth]{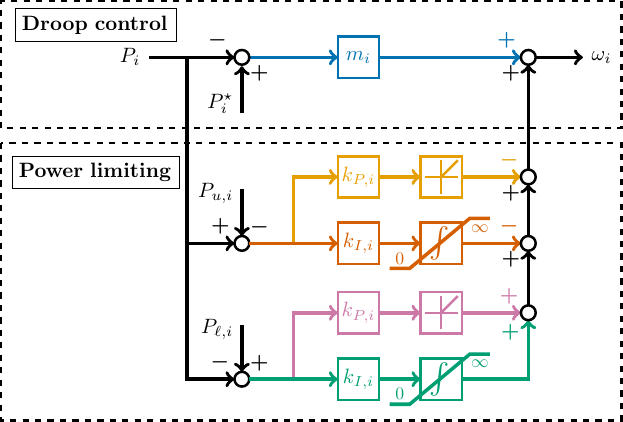}
        \caption{Grid-forming power-limiting droop control combines droop control with nonlinear proportional-integral controls for power-limiting that activate for power-limiting when a power limit is exceeded.
        \label{fig:plim}}
    \end{center}
\end{figure}
When no power limit is active (i.e., $P_{\ell,i} < P_i < P_{u,i}$), power-limiting droop control reduces to well-known (proportional) $P-f$ droop control~\cite{CDA1993} that controls the converter power injection~\cite[Sec.~IV-C]{RLB2012} by adjusting the frequency $\omega_i$ in proportion to the deviation of $P_i\in\mathbb{R}$ from the converter power setpoint $P^\star_i\in\mathbb{R}$ to enable parallel operation~\cite{CDA1993} of grid-forming converters.

We emphasize that, in general, the load $P_L \in \mathbb{R}$ is not known, and the sum of the power setpoints does not match the load (i.e., $\sum_{i\in\mc N} P^\star_i \neq \sum_{i\in\mc N} P_{L,i}$). Moreover, the exact topology and edge weights of the graph $\mc G$ are generally not known. In this setting, the control objective is to render a synchronous solution (i.e., $\omega_i=\omega_j$ for all $(i,j) \in \mc N \times \mc N$) stable for any connected graph $\mc G$ while sharing any additional load between the converters according to the ratio of droop coefficients $m_i \in \mathbb{R}_{>0}$ (see, e.g., \cite{SDB2013}). While $P-f$ droop control achieves these objectives under mild assumptions~\cite{SDB2013, SOA2014}, it does not account for the converter power limits $P_{\ell,i} < P_i < P_{u,i}$. A common heuristic used to include power limits employs proportional-integral (PI) $P-f$ droop with proportional and integral gains $k_{P,i} \in \mathbb{R}_{>0}$ and $k_{I,i} \in \mathbb{R}_{>0}$ that activate when a power limit is reached~\cite[Fig.~4]{DLK2019}. 

For example, if a converter reaches or exceeds its upper power limit (i.e., $P_i \geq P_{u,i}$), then power-limiting droop control depicted in Fig.~\ref{fig:plim} will reduce the frequency in proportion to the constraint violation and its integral. Due to the nature of integral control, this control should intuitively control the converter to an operating point within its power limits asymptotically. In the application context, the main contribution of this work is to use Theorem~\ref{thm:GASpowerlimit} to show that power-limiting droop control renders the overall converter-based system stable with respect to a synchronous solution within the converter power limits. To this end, we note that Assumption~\ref{assum:feas} and Assumption~\ref{assum:setpoint} formalize assumptions that are implicitly made in the literature on grid-forming control.

\subsection{Multi-converter power system frequency dynamics as projected dynamical system}
We begin by formulating the frequency dynamics of a power system comprised of converters using power-limiting droop control as a projected dynamical system. Using Definition~\ref{def:tangentcone} and Definition~\ref{def:projection} the frequency dynamics of a power system consisting of converters using power-limiting droop control is equivalent to the projected dynamical system \eqref{eq:plimdroopprime}. 
\begin{subequations}\label{eq:plimdroopprime}
    \begin{align}
       \ddt \theta_i =& m_i (P^\star_i-P_i)\! -\! k_{P,i} \Pi_{{\mathbb{R}}_{\geq 0}}(P_i-P_{u,i}) - \mu_{u,i}\\ &+ k_{P,i} \Pi_{{\mathbb{R}}_{\geq 0} }(P_{\ell,i} - P_i) + \mu_{\ell,i}, \nonumber\\
    \ddt\mu_{\ell,i} =&  \Pi_{\mathcal{T}_{\mu_{\ell,i}} \mathbb{R}_{\geq 0}} \left(k_{I,i} (P_{\ell,i} - P_i)\right),\\
    \ddt\mu_{u,i} =&  \Pi_{\mathcal{T}_{\mu_{u,i}} \mathbb{R}_{\geq 0}} \left(k_{I,i} (P_i - P_{u,i})\right),
    \end{align}
\end{subequations}

In this case, the controller states $\theta_i$,  $\mu_{u,i} \in \mathbb{R}_{\geq 0}$, and $\mu_{\ell,i} \in \mathbb{R}_{\geq 0}$ that correspond to the ac voltage phase angles and integral of the upper and lower power limit violations. Moreover, we define the ac voltage frequency deviation $\omega_i = \ddt \theta_i \in \mathbb{R}$ from the nominal frequency $\omega_0$. To simplify our analysis and notation, we introduce the following preliminary result.
\begin{lemma}[\textbf{Scaled scalar projection}]\label{lem:proj}
        Given a constant $a \in \mathbb{R}_{>0}$ and scalars $v \in \mathbb{R}$ and $x \in \mathbb{R}$, it holds that $\Pi_{\mathcal{T}_{x} \mathbb{R}_{\geq 0}}(a v) = a \Pi_{\mathcal{T}_{x} \mathbb{R}_{\geq 0}}(v)$.
    \end{lemma}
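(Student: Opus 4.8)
The plan is to exploit two elementary facts: first, that $\mathcal{T}_x \mathbb{R}_{\geq 0}$ is a nonempty closed convex cone for every $x \in \mathbb{R}_{\geq 0}$, and second, that the Euclidean projection onto any closed convex cone is positively homogeneous. The first fact is immediate from Definition~\ref{def:tangentcone}: $\mathcal{T}_x \mathbb{R}_{\geq 0}$ is defined as the polar of the normal cone $\mathcal{N}_x \mathbb{R}_{\geq 0}$, and a polar cone is always a closed convex cone; in particular $c\,\mathcal{T}_x \mathbb{R}_{\geq 0} = \mathcal{T}_x \mathbb{R}_{\geq 0}$ for every $c \in \mathbb{R}_{>0}$. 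Concretely one has $\mathcal{T}_x \mathbb{R}_{\geq 0} = \mathbb{R}$ when $x > 0$ and $\mathcal{T}_0 \mathbb{R}_{\geq 0} = \mathbb{R}_{\geq 0}$, both of which are visibly cones.

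The main step is then a change of variables in the defining optimization problem of the projection. Starting from $\Pi_{\mathcal{T}_x \mathbb{R}_{\geq 0}}(a v) = \argmin_{p \in \mathcal{T}_x \mathbb{R}_{\geq 0}} |p - a v|$, I would substitute $p = a q$. Since $a > 0$ and $\mathcal{T}_x \mathbb{R}_{\geq 0}$ is a cone, $p$ ranges over $\mathcal{T}_x \mathbb{R}_{\geq 0}$ if and only if $q$ does, and $|a q - a v| = a\,|q - v|$. Because $a$ is a fixed positive scalar, the minimizer over $q$ coincides with the minimizer of $|q - v|$ over $\mathcal{T}_x \mathbb{R}_{\geq 0}$, namely $q^\star = \Pi_{\mathcal{T}_x \mathbb{R}_{\geq 0}}(v)$, so the minimizer over $p$ is $p^\star = a\,\Pi_{\mathcal{T}_x \mathbb{R}_{\geq 0}}(v)$. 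Uniqueness of the projection onto a nonempty closed convex set makes "the minimizer'' well defined throughout, so this yields $\Pi_{\mathcal{T}_x \mathbb{R}_{\geq 0}}(a v) = a\,\Pi_{\mathcal{T}_x \mathbb{R}_{\geq 0}}(v)$ as claimed.

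As an alternative that avoids any appeal to cone structure, I would instead dispatch the two cases directly: if $x > 0$ then $\mathcal{T}_x \mathbb{R}_{\geq 0} = \mathbb{R}$ and both sides equal $a v$; if $x = 0$ then $\Pi_{\mathbb{R}_{\geq 0}}(\cdot) = \max\{\cdot, 0\}$, and $\max\{a v, 0\} = a \max\{v, 0\}$ precisely because $a > 0$. Either route gives the result in a couple of lines. I do not anticipate a genuine obstacle here; the only point requiring a word of care is justifying that $\mathcal{T}_x \mathbb{R}_{\geq 0}$ is a cone so that the scaling argument is legitimate, and this is handed to us directly by Definition~\ref{def:tangentcone}. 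This lemma is then applied in the sequel to absorb the gains $\sqrt{k_{I,i}}$ (equivalently $k_{I,i}$) through the tangent-cone projections in the dual dynamics, reconciling \eqref{eq:plimdroopprime} with the normalized form \eqref{eq:plimdroopsc}.
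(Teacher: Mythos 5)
Your proposal is correct, but it takes a genuinely different route from the paper. The paper proves the lemma by invoking the limit characterization $\Pi_{\mathcal{T}_{x}\mathcal{C}}(v) = \lim_{\delta \to 0} \tfrac{1}{\delta}\left(\Pi_{\mathcal{C}}(x+\delta v) - x\right)$ (cited from an external reference on convex analysis) and then rescaling the limit parameter via $\delta^{\prime} = a\delta$, so the whole argument runs through the projection onto the ambient set $\mathbb{R}_{\geq 0}$ rather than onto the tangent cone itself. You instead observe that $\mathcal{T}_{x}\mathbb{R}_{\geq 0}$ is a closed convex cone and establish positive homogeneity of the projection directly, by the substitution $p = aq$ in the defining $\argmin$ (or, even more concretely, by the two-case computation $\mathcal{T}_{x}\mathbb{R}_{\geq 0} = \mathbb{R}$ for $x>0$ and $\mathcal{T}_{0}\mathbb{R}_{\geq 0} = \mathbb{R}_{\geq 0}$ with $\max\{av,0\} = a\max\{v,0\}$). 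Your argument is self-contained, avoids the external citation and the limit machinery, and the cone-homogeneity version generalizes verbatim to tangent cones of arbitrary convex sets in any dimension; the paper's route is the one you would take if you wanted to keep everything phrased in terms of $\Pi_{\mathcal{C}}$, which is how projections are manipulated elsewhere in the projected-dynamical-systems literature it draws on. Both proofs are valid and serve the same downstream purpose of absorbing the gains $k_{I,i}$ through the dual projections.
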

    \begin{proof}
        Using~\cite[Prop.~5.3.5]{jean}, $\Pi_{\mathcal{T}_{x} \mathbb{R}_{\geq 0}}(v)$ can be expressed as $\Pi_{\mathcal{T}_{x}\mathcal{C}}(v) = \lim_{\delta \to 0} \frac{1}{\delta} (\Pi_{\mathcal{C}}(x+\delta v) - x)$. Then, for $x \in \partial(\mathcal{C})$ it holds that $\Pi_{\mathcal{T}_{x} \mathbb{R}_{\geq 0}}(a v) = \lim_{\delta \to 0} \frac{1}{\delta}(\Pi_{\mathbb{R}_{\geq 0}}(x+\delta a v) - x)$. Letting $\delta^\prime=a \delta$ results in
        $\Pi_{\mathcal{T}_{x} \mathbb{R}_{\geq 0}}(a v) =a \lim_{\delta^\prime \to 0} \frac{1}{\delta^\prime} (\Pi_{\mathbb{R}_{\geq 0}}(x+\delta^\prime v) - x)=a \Pi_{\mathcal{T}_{x} \mathbb{R}_{\geq 0}}(v)$.
    \end{proof}

Using Lemma~\ref{lem:proj}, we can rewrite \eqref{eq:plimdroopprime} using the change of variables $\sqrt{k_{I,i}} \lambda_{\ell,i} = \mu_{\ell,i}$, and $\sqrt{k_{I,i}} \lambda_{u,i} = \mu_{u,i}$ as \eqref{eq:plimdroopsc}. We emphasize that this model assumes that the load $P_L$, power setpoints $P^\star$, and power limits $P_\ell$ and $P_u$ are constant. This assumption is satisfied on the time-scales of interest for studying frequency stability of converter-dominated power systems. Extensions to time-varying loads, setpoints, and power limits are seen as an interesting area for future work. 

\subsection{Frequency synchronization and active constraints}\label{subsec:dynconstr}
The synchronous frequency $\omega_s \in \mathbb{R}$ is widely used in control and operation of power systems. To analyze the steady-state frequency, we use properties of the set of KKT points to establish that \eqref{eq:plimdroop} achieves frequency synchronization under constraints and establish that the synchronous frequency deviation is a function of the active sets $\mc I_u$ and $\mc I_\ell$, total load $\sum_{i \in \mc N}  P_{L,i}$, total power dispatch $\sum_{i \in \mc N} P^\star_i$, and droop coefficients, but does not depend on the graph $\mc G$ or the control gains of the power-limiting PI controls.

\begin{theorem}[\textbf{Frequency synchronization}]\label{th:syncfreq}
    Consider $P_\ell$, $P_u$, $P_L$, and $P^\star$ such that Assumption~\ref{assum:feas} and Assumption~\ref{assum:setpoint} hold. One of the following holds
    \begin{enumerate}[label=(\roman*)]
        \item $\sum_{i \in \mc N}  P_{L,i} < \sum_{i \in \mc N} P^\star_i$ and \[\omega_s\!=\!\frac{\sum_{i \notin \mathcal{I}_\ell} P_i^\star+\sum_{i \in \mathcal{I}_\ell} P_{\ell, i}-\sum_{i \in \mathcal{N}} P_{L, i}}{\sum_{i \notin \mathcal{I}_\ell} m_i^{-1}} > 0,\] \label{th:syncfreq:lower}
        \item   $\sum_{i \in \mc N}  P_{L,i} = \sum_{i \in \mc N} P^\star_i$ and $\omega_s=0$, \label{th:syncfreq:equal}
        \item $\sum_{i \in \mc N}  P_{L,i} > \sum_{i \in \mc N} P^\star_i$ and 
        \[\omega_s\!=\!\frac{\sum_{i \notin \mathcal{I}_u} P_i^\star+\sum_{i \in \mathcal{I}_u} P_{u, i}-\sum_{i \in \mathcal{N}} P_{L, i}}{\sum_{i \notin \mathcal{I}_u} m_i^{-1}} < 0.\] \label{th:syncfreq:higher}
    \end{enumerate}
\end{theorem}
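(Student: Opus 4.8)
The plan is to combine the convergence statement of Theorem~\ref{thm:GASpowerlimit} with the nodal KKT conditions of Definition~\ref{def:stheta} and the power balance enforced by $\mathbbl 1_n^\mathsf{T} L = 0$. First I would fix the limit point: by Theorem~\ref{thm:GASpowerlimit} the solution converges to some $(\theta^\star,\lambda^\star)\in\mc S_\theta$ in the sense that $L\theta(t)\to L\theta^\star$, $\lambda(t)\to\lambda^\star$, and $\omega(t)\to\mathbbl 1_n\omega_s$. That $L\theta$ and $\lambda$ — not merely the point-to-set distance — converge follows from the edge-coordinate analysis used to prove Theorem~\ref{thm:edgeconvergence} together with Lemma~\ref{lem:coincide}, since the initial condition $\eta_0=VB^\mathsf{T}\theta_0$ lies in $\Ima(VB^\mathsf{T})$ and hence $\gamma_0(t)\equiv\gamma_0(0)$, so $\eta(t)$ converges to a single point $\eta^\star$ with $L\theta^\star\coloneqq BV\eta^\star$. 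Writing $P\coloneqq L\theta^\star+P_L$, primal feasibility $P_\ell\le P\le P_u$ makes the proportional projection term in \eqref{eq:plimdroop:theta} vanish at the limit, and because that right-hand side is continuous in $\theta$ we may pass to the limit to obtain $M(P-P^\star)+K_I(\lambda^\star_u-\lambda^\star_\ell)=-\mathbbl 1_n\omega_s$; this is precisely the stationarity condition \eqref{eq:KKT:nodal:optimality} with the $\ker(B^\mathsf{T})$-multiple identified as $-\omega_s$.

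Next I would read this equation component-wise and invoke complementary slackness. For $i\notin\mc I_\ell\cup\mc I_u$ both multipliers vanish, giving $P_i=P^\star_i-\omega_s/m_i$. For $i\in\mc I_\ell$ one has $P_i=P_{\ell,i}$ and, since $P_{\ell,i}<P_{u,i}$ by Assumption~\ref{assum:feas}, $\lambda^\star_{u,i}=0$; the $i$-th equation then reads $\omega_s=m_i(P^\star_i-P_{\ell,i})+\sqrt{k_{I,i}}\,\lambda^\star_{\ell,i}$, which by Assumption~\ref{assum:setpoint} and $\lambda^\star_{\ell,i}\ge0$ forces $\omega_s>0$; the symmetric computation for $i\in\mc I_u$ forces $\omega_s<0$. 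Combining this with Proposition~\ref{prop:mut:activesets} ($\mc I_\ell$ and $\mc I_u$ cannot both be non-empty) and the power-balance identity $\mathbbl 1_n^\mathsf{T}P=\mathbbl 1_n^\mathsf{T}P_L$, I would split into the case $\mc I_u=\emptyset$ and the symmetric case $\mc I_\ell=\emptyset$. In the case $\mc I_u=\emptyset$, summing $P_i=P^\star_i-\omega_s/m_i$ over $i\notin\mc I_\ell$ and $P_i=P_{\ell,i}$ over $i\in\mc I_\ell$, and using $\sum_i P_i=\sum_i P_{L,i}$, yields the formula in \ref{th:syncfreq:lower}; Assumption~\ref{assum:feas} rules out $\mc I_\ell=\mc N$ (else $\sum_iP_{\ell,i}=\sum_iP_i=\sum_iP_{L,i}$), so the denominator is strictly positive. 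The case $\mc I_\ell=\emptyset$ gives \ref{th:syncfreq:higher} in the same way.

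Finally I would determine which of the three cases actually occurs from the sign of $\sum_i P^\star_i-\sum_i P_{L,i}$. If $\sum_iP_{L,i}<\sum_iP^\star_i$, then $\mc I_u\neq\emptyset$ is impossible: it would force $\omega_s<0$ and $\mc I_\ell=\emptyset$, hence $P_i=P^\star_i-\omega_s/m_i>P^\star_i$ for $i\notin\mc I_u$ and $P_i=P_{u,i}>P^\star_i$ for $i\in\mc I_u$ (Assumption~\ref{assum:setpoint}), contradicting $\sum_iP_i=\sum_iP_{L,i}<\sum_iP^\star_i$; so $\mc I_u=\emptyset$, the formula of \ref{th:syncfreq:lower} holds, and $\omega_s>0$ (directly when $\mc I_\ell\neq\emptyset$, and from the formula when $\mc I_\ell=\emptyset$). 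The cases $\sum_iP_{L,i}=\sum_iP^\star_i$ and $\sum_iP_{L,i}>\sum_iP^\star_i$ are handled identically, the former forcing both active sets empty (hence $\omega_s=0$), the latter forcing $\mc I_\ell=\emptyset$ (hence $\omega_s<0$). I expect the main obstacle to be bookkeeping rather than any deep step: carefully using Assumption~\ref{assum:setpoint} to pin the sign of $\omega_s$ from the active set and, conversely, translating the sign of $\sum_iP^\star_i-\sum_iP_{L,i}$ back into the correct active-set dichotomy so that the three stated cases are exhaustive and consistent; a secondary subtlety is justifying the genuine convergence of $L\theta$ and $\lambda$ before passing to the limit in the dynamics.
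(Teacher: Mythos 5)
Your proposal is correct and follows essentially the same route as the paper's proof: pass to the limit via Theorem~\ref{thm:GASpowerlimit}, read the stationarity condition component-wise with complementary slackness, sum against the power balance $\mathbbl{1}_n^\mathsf{T}P=\mathbbl{1}_n^\mathsf{T}P_L$, and pin the sign of $\omega_s$ from the active sets via Proposition~\ref{prop:mut:activesets} and Assumption~\ref{assum:setpoint}. You additionally spell out two details the paper leaves implicit (genuine convergence of $L\theta$ and $\lambda$ before passing to the limit, and strict positivity of the denominator by ruling out $\mc I_\ell=\mc N$), which is a welcome refinement but not a different argument.
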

\begin{proof} By Theorem~\ref{thm:GASpowerlimit}, there exists $(\theta^\star,\lambda^\star) \in \mc S_\theta$ such that
\begin{align*}
    \lim_{t \to \infty} \omega(t) =  M \left(P^\star - P_{L} - L \theta^\star \right) - (\alpha \otimes K_I) \lambda^\star = \mathbbl{1}_n \omega_{s}.
\end{align*}
Where we have used \eqref{eq:plimdroop:theta} and the fact that $\Pi_{\mathbb{R}^{2n}_{\geq 0}}(g(L\theta^\star))=\mathbbl{0}_{2n}$ for all $(\theta^\star, \lambda^\star) \in \mc S_\theta$. Using $P=L\theta + P_L$ it follows that
\begin{align}\label{eq:omegas}
   m^{-1}_i \omega_{s} = P^\star_i - P_i + m_i^{-1} \sqrt{k_i}(\lambda^\star_{\ell,i} - \lambda^\star_{u,i}), \; \forall i \in \mc N.
\end{align}
Since $\lambda^\star_{u,i}=\lambda^\star_{\ell,i}=0$ for all $i\notin \mc I_u \cup \mc I_\ell$, we obtain
\begin{align}\label{eq:freqbalance}
     \sum\nolimits_{i\notin \mc I_u \cup \mc I_\ell}  m^{-1}_i \omega_{s} = \sum\nolimits_{i\notin \mc I_u \cup \mc I_\ell} P^\star_i - \sum\nolimits_{i\notin \mc I_u \cup \mc I_\ell}  P_i.
\end{align}
Moreover, considering $\mathbbl{1}^\mathsf{T}_n (L \theta + P_L) = \mathbbl{1}^\mathsf{T}_n P_L$ it follows that $\sum_{i \in \mc N} P_i = \sum_{i \in \mc N} P_{L,i}$ and
\begin{align*}
    \sum\nolimits_{i \notin \mc I_u \cup \mc I_\ell} P_i + \sum\nolimits_{i \in \mc I_\ell} P_{\ell,i} + \sum\nolimits_{i \in \mc I_u} P_{u,i}   = \sum\nolimits_{i \in \mc N} P_{L,i}.
\end{align*}
Solving for $\sum_{i \notin \mc I_u \cup \mc I_\ell} P_i$ and substituting into \eqref{eq:freqbalance}, results in
\begin{align*}
    \sum_{i \notin \mathcal{I}_{u} \cup \mathcal{I}_{\ell}}\! m_{i}^{-1} \omega_{s} \!= \!\sum_{i \notin \mathcal{I}_{u} \cup \mathcal{I}_{\ell}}\! P^\star_i + \sum_{i \in \mathcal{I}_{u}}\! P_{u, i} + \sum_{i \in \mathcal{I}_{\ell}}\! P_{\ell, i} - \sum_{i \in \mc N}\! P_{L,i}
\end{align*}
and
\begin{align}
    &\omega_s =\frac{\sum_{i \notin \mathcal{I}_{u} \cup \mathcal{I}_{\ell}} \! P_i^\star+ \sum_{i \in \mathcal{I}_u}\! P_{u, i}+\sum_{i \in \mathcal{I}_\ell} \! P_{\ell, i}-\sum_{i \in \mathcal{N}}\! P_{L, i}}{\sum_{i \notin \mathcal{I}_{u} \cup \mathcal{I}_{\ell}}\! m_i^{-1}}. \label{eq:omega_iliu}
\end{align}

To show \ref{th:syncfreq:lower}, consider $\mathcal{I}_{\ell} \neq \emptyset$. Then, by Proposition~\ref{prop:mut:activesets}, $\mathcal{I}_{u} = \emptyset$. Moreover, (i) $\lambda_{\ell, i} \geq 0$ by dual feasibility, (ii) $\lambda_{u, i} = 0$ by complementary slackness, and (iii) $P^\star_i - P_{\ell, i} >0$ for all $i \in \mc N$ by Assumption~\ref{assum:setpoint}. Then \eqref{eq:omegas} results in
\begin{align*}
    \omega_s = \omega_i = m_i (P^\star_i - P_{\ell, i}) + \sqrt{k_i} (\lambda_{\ell, i} - \lambda_{u,i}) >0, \; \forall i \in \mathcal{I}_{\ell},
\end{align*}
i.e., $\omega_s>0$ holds. Moreover, \eqref{eq:omega_iliu} and $\mc I_u=\emptyset$ imply that
\begin{align*}
    \sum\nolimits_{i \notin \mathcal{I}_{\ell}}  P_i^\star+\sum\nolimits_{i \in \mathcal{I}_\ell}  P_{\ell, i}-\sum\nolimits_{i \in \mathcal{N}} P_{L, i} >0.
\end{align*}
Additionally, Assumption~\ref{assum:feas} implies $\sum_{i \in \mathcal{I}_\ell} \! P_{\ell, i} < \sum_{i \in \mathcal{I}_\ell} \! P^\star_i$, i.e.,
\begin{align*}
    &\sum_{i \in \mc N} \! P_i^\star-\sum_{i \in \mathcal{N}}\! P_{L, i} > \sum_{i \notin \mathcal{I}_{\ell}} \! P_i^\star+\sum_{i \in \mathcal{I}_\ell} \! P_{\ell, i}-\sum_{i \in \mathcal{N}}\! P_{L, i} >0,
\end{align*}
i.e., $\sum_{i \in \mathcal{N}}\! P_{L, i}<\sum_{i \in \mc N} \! P_i^\star$. Assuming from $\mathcal{I}_{u} \neq \emptyset$ and applying the same steps used for showing \ref{th:syncfreq:lower} establishes \ref{th:syncfreq:higher}.
%

It remains to show \ref{th:syncfreq:equal}. To this end, assume that $\mathcal{I}_{\ell} = \emptyset$. Using $\sum_{i \in \mc N} P_{L,i} = \sum_{i \in \mc N} P^\star_i$, the synchronous frequency reduces to 
\begin{align*}
    \omega_{s} = \frac{\Sigma_{i \in \mathcal{I}_{u}}P_{u, i} - P^\star_{i}}{\Sigma_{i \notin \mathcal{I}_{u}} m_{i}^{-1}} > 0.
\end{align*}
However, (i) $\lambda^\star_{u,i}\geq 0$ by primal feasibility, (ii) $\lambda^\star_{\ell,i}=0$, by complementary slackness, and (iii) $P^\star_i - P_{u, i}<0$ for all $i \in \mc N$ by Assumption~\ref{assum:setpoint}. This results in 
\begin{align*}
\omega_s = \omega_i = m_i\left(P^\star_i - P_{u, i}\right) - \sqrt{k_i}\lambda_{u, i} < 0, \quad \forall i \in \mathcal{I}_{u},
\end{align*}
and it follows that $\mathcal{I}_{u} = \mathcal{I}_{\ell} = \emptyset$. Following the same steps for $\mathcal{I}_u = \emptyset$ and $\sum_{i \in \mc N} P_{L,i} = \sum_{i \in \mc N} P^\star_i$ shows that $\mathcal{I}_\ell = \emptyset$ has to hold. The Theorem follows by noting that either $\mc I_u=\emptyset$ or $\mc I_\ell=\emptyset$ by Proposition~\ref{prop:mut:activesets}.
\end{proof} 

Finally, we show that the mismatch between the total load $\sum_{i \in \mc N}  P_{L,i}$ and total power dispatch $\sum_{i \in \mc N} P^\star_i$ determines which converter limits are active upon convergence.
\begin{proposition}[\textbf{Active constraint set}]\label{prop:activesets}
    Consider $P_\ell$, $P_u$, $P_L$, and $P^\star$ such that Assumption~\ref{assum:feas} and Assumption~\ref{assum:setpoint} hold.  Then, for all $(\theta,\lambda) \in \mathcal{S}_{\theta}$ it holds that 
    \begin{enumerate}[label=(\roman*)]
    \item $\mathcal{I}_{\ell} \neq \emptyset$ implies $\mathcal{I}_{u} = \emptyset$ and $\sum_{i \in \mc N} P_{L,i} < \sum_{i \in \mc N} P^\star_i$, \label{prop:activesets:ellu}
    \item $\mathcal{I}_{u} \neq \emptyset$ implies $\mathcal{I}_{\ell} = \emptyset$ and $\sum_{i \in \mc N} P_{L,i} > \sum_{i \in \mc N} P^\star_i$, \label{prop:activesets:uell}
    \item $\sum_{i \in \mc N}  P_{L,i} < \sum_{i \in \mc N} P^\star_i$ implies $\mathcal{I}_{u} = \emptyset$, \label{prop:activesets:lowload}
    \item $\sum_{i \in \mc N} P_{L,i} > \sum_{i \in \mc N} P^\star_i$ implies $\mathcal{I}_{\ell} = \emptyset$. \label{prop:activesets:highload}
    \end{enumerate}
\end{proposition}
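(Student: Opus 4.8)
The plan is to derive both sign relations directly from the KKT stationarity equation, without routing through the trajectory-based statement of Theorem~\ref{th:syncfreq}. First I would record that for \emph{any} $(\theta^\star,\lambda^\star)\in\mc S_\theta$ the optimality condition \eqref{eq:KKT:nodal:optimality}, together with $\ker(B^\mathsf{T})=\vspan(\mathbbl{1}_n)$, furnishes a scalar $\omega_s\in\mathbb{R}$ with $M(P^\star-P)+K_I(\lambda^\star_\ell-\lambda^\star_u)=\mathbbl{1}_n\omega_s$, i.e.\ the componentwise identity \eqref{eq:omegas}; this is purely an algebraic rephrasing of stationarity and hence holds at every KKT point, not only in the $t\to\infty$ limit. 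I would also note that $\mc N\setminus(\mc I_\ell\cup\mc I_u)\neq\emptyset$: if $\mc I_\ell=\mc N$ then $\sum_i P_{L,i}=\sum_i P_i=\sum_i P_{\ell,i}$, contradicting Assumption~\ref{assum:feas}, and symmetrically for $\mc I_u=\mc N$.

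For part~\ref{prop:activesets:ellu}, suppose $\mc I_\ell\neq\emptyset$. Proposition~\ref{prop:mut:activesets} gives $\mc I_u=\emptyset$, so complementary slackness forces $\lambda^\star_{u,i}=0$ for all $i$. Evaluating \eqref{eq:omegas} at some $i\in\mc I_\ell$ (where $P_i=P_{\ell,i}$) gives $m_i^{-1}\omega_s=(P^\star_i-P_{\ell,i})+m_i^{-1}\sqrt{k_i}\lambda^\star_{\ell,i}>0$ by Assumption~\ref{assum:setpoint} and $\lambda^\star_{\ell,i}\geq 0$, hence $\omega_s>0$. Then I compare $P_i$ with $P^\star_i$ node-by-node: for $i\in\mc I_\ell$, $P_i=P_{\ell,i}<P^\star_i$ by Assumption~\ref{assum:setpoint}; for $i\notin\mc I_\ell$ (hence $i\notin\mc I_u\cup\mc I_\ell$) complementary slackness kills both multipliers and \eqref{eq:omegas} reads $P_i=P^\star_i-m_i^{-1}\omega_s<P^\star_i$. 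Summing over $\mc N$ and using $\mathbbl{1}_n^\mathsf{T}(L\theta^\star+P_L)=\mathbbl{1}_n^\mathsf{T}P_L$, i.e.\ $\sum_i P_i=\sum_i P_{L,i}$, yields $\sum_i P_{L,i}<\sum_i P^\star_i$. Part~\ref{prop:activesets:uell} is the mirror image: $\mc I_u\neq\emptyset$ forces $\mc I_\ell=\emptyset$ and $\lambda^\star_{\ell,i}=0$, the nodes in $\mc I_u$ give $\omega_s<0$, so $P_i>P^\star_i$ for every $i$, whence $\sum_i P_{L,i}>\sum_i P^\star_i$.

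Parts~\ref{prop:activesets:lowload} and~\ref{prop:activesets:highload} then follow by contraposition: if $\sum_i P_{L,i}<\sum_i P^\star_i$ but $\mc I_u\neq\emptyset$, part~\ref{prop:activesets:uell} would force $\sum_i P_{L,i}>\sum_i P^\star_i$, a contradiction, so $\mc I_u=\emptyset$; symmetrically, $\sum_i P_{L,i}>\sum_i P^\star_i$ together with $\mc I_\ell\neq\emptyset$ contradicts part~\ref{prop:activesets:ellu}, so $\mc I_\ell=\emptyset$. I do not anticipate a genuine obstacle --- the statement is essentially a corollary of Proposition~\ref{prop:mut:activesets} and the frequency identity. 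The one point needing care, which I would make explicit at the outset, is that \eqref{eq:omegas} (and the closed form \eqref{eq:omega_iliu}, if one prefers to use it) is legitimately available at an arbitrary KKT point rather than merely along a convergent trajectory; this is why I would re-derive $\omega_s$ from \eqref{eq:KKT:nodal:optimality} instead of quoting Theorem~\ref{thm:GASpowerlimit}.
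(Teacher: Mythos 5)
Your proof is correct and follows essentially the same route as the paper: Proposition~\ref{prop:mut:activesets} combined with the sign of the synchronous multiplier $\omega_s$, with parts~\ref{prop:activesets:lowload} and~\ref{prop:activesets:highload} obtained by contradiction from parts~\ref{prop:activesets:ellu} and~\ref{prop:activesets:uell}. Your one refinement --- deriving the identity \eqref{eq:omegas} algebraically from \eqref{eq:KKT:nodal:optimality} and $\ker(B^\mathsf{T})=\vspan(\mathbbl{1}_n)$ at an arbitrary point of $\mc S_\theta$ rather than importing it from the trajectory limit in the proof of Theorem~\ref{th:syncfreq} --- is indeed the cleaner justification given that the proposition quantifies over all KKT points, and your node-by-node comparison $P_i<P^\star_i$ followed by summation is a legitimate shortcut around the closed form \eqref{eq:omega_iliu}.
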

\begin{proof} Statement~\ref{prop:activesets:ellu} and statement~\ref{prop:activesets:uell} are direct consequences of Proposition~\ref{prop:mut:activesets} and the proof of Theorem~\ref{th:syncfreq}. Statement~\ref{prop:activesets:lowload} can be shown by contradiction. In particular, $\sum_{i \in \mc N} P_{L,i} < \sum_{i \in \mc N} P^\star_i$ implies $\omega_s>0$. However, per the proof of Theorem~\ref{th:syncfreq}, if there exists $i \in \mc I_u \neq \emptyset$, then $\omega_s<0$. Thus, $\mc I_u = \emptyset$ has to hold. Statement~\ref{prop:activesets:highload} follows by noting that $\sum_{i \in \mc N} P_{L,i} > \sum_{i \in \mc N} P^\star_i$ implies $\omega_s<0$. However, per the proof of Theorem~\ref{th:syncfreq}, if there exists $i \in \mc I_\ell \neq \emptyset$, then $\omega_s>0$. Thus, $\mc I_\ell = \emptyset$ has to hold. \end{proof}

\subsection{Discussion}
Theorem~\ref{th:syncfreq} recovers and extends the well known results for (proportional) $P-f$ droop control, i.e., if no converter is operating at a power limit (i.e., $\mc I_u = \mc I_\ell = \emptyset$), then the steady-state frequency deviation is determined by the droop coefficients $m_i \in \mathbb{R}_{>0}$ and the mismatch $\sum_{i \in \mc N} P^\star_i - P_{L,i}$ between the total power dispatch and load.

Moreover, if the total load $\sum_{i \in \mc N}  P_{L,i}$ is smaller than the total power dispatch $\sum_{i \in \mc N} P^\star_i$, then converters can only be at their lower power limit (i.e., $\mc I_u=\emptyset$) and the synchronous frequency is determined by the sum of the droop coefficients and sum of the power setpoints of converters not at the lower limit (i.e., $i \notin \mc I_\ell$), the total load, and the sum of the lower power limits of the converters at the lower limit (i.e., $i \in \mc I_\ell$). In contrast, if the total load $\sum_{i \in \mc N}  P_{L,i}$ is larger than the total power dispatch  $\sum_{i \in \mc N}  P^\star_i$, then converters can only be at their upper power limit (i.e., $\mc I_\ell=\emptyset$) and the synchronous frequency is determined by the sum of the droop coefficients and sum of the power setpoints of converters not at the upper limit (i.e., $i \notin \mc I_u$), the total load, and the sum of the upper power limits of the converters at the upper limit (i.e., $i \in \mc I_u$). Moreover, we note that if $\sum_{i \in \mc N}  P_{L,i}$ is smaller (larger) than the total power dispatch  $\sum_{i \in \mc N}  P^\star_i$, then the synchronous frequency is larger (smaller) than nominal frequency $\omega_0$. The remainder of the manuscript will focus on proving the aforementioned results.

Finally, we note that by Corollary~\ref{corr:radial}, the multi-converter power system frequency dynamics \eqref{eq:plimdroopprime} are guaranteed to converge to a uniquely determined voltage phase angle difference $\eta$ across each transmission line if the power network has a radial topology. In contrast, for a meshed network, the optimal voltage phase angle difference $\eta$ across each transmission line may not be unique. In other words, different solutions of the optimal dc power flow problem \eqref{eq:deCFP} may correspond to different line loading.

\begin{figure*}[htbp]
    \centering
    \vspace{-4ex}
        \includegraphics[width=0.95\linewidth]{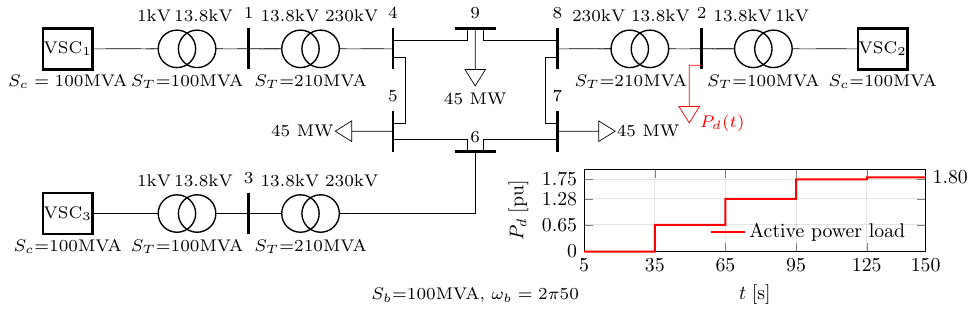}
        \caption{IEEE 9-bus test case system with three two-level voltage source converters and constant (black) and dynamic (red) loads.        \label{fig:9bus_system}}
    \begin{center}
        \includegraphics[width=1\linewidth]{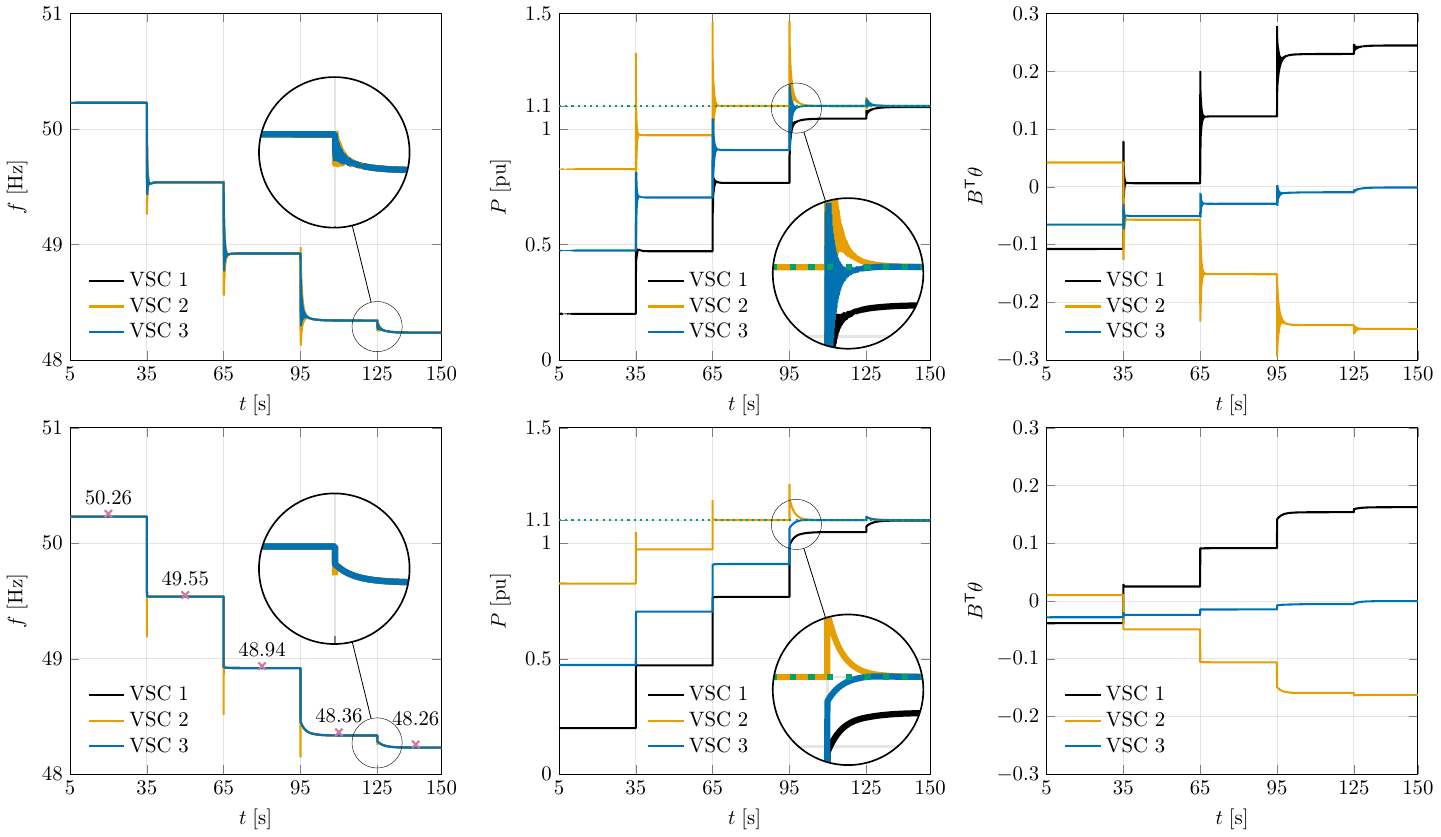}
        \caption{Results of an EMT simulation (top row) and the reduced-order model (bottom row) for the IEEE 9-bus system depicted in Figure~\ref{fig:9bus_system}. The green line indicates the upper power limit of each VSC and the pink markers indiciate the frequency deviation predicted by Theorem~\ref{th:syncfreq} using the parameters in Table~\ref{table:parameters}.\label{fig:freqresponse}}
    \end{center}
\end{figure*}

\section{Numerical Case Study}\label{sec:numerical}
To validate the reduced-order model \eqref{eq:plimdroopprime} of the multi-converter power system and illustrate the analytical results, we use an Electromagnetic transient (EMT) simulation of the IEEE 9-bus system (see Figure~\ref{fig:9bus_system}) in which synchronous generators have been replaced with two-level voltage source converters controlled by power-limiting droop control.
\subsection{Power System Model}
The IEEE 9-bus system with three power converters is shown in Figure~\ref{fig:9bus_system}. Specifically, we use an average model of two-level voltage source converters (VSCs) with $LC$ output filter and standard cascaded inner voltage and current loops. The reader is referred to \cite{TGA+2020} for details on the transformer parameters, converter parameters, and control gains of the inner control loops. The converter rating, power setpoints, power limits, and control gains for power-limiting droop control used in this work are given in Table~\ref{table:parameters}.  
\begin{table}[b!]
    \centering
    \caption{Converter and control parameters for power-limiting droop control.}
    \label{table:parameters}
    \scriptsize
    \resizebox{\columnwidth}{!}{
    \begin{tabular}{|c|c|c|c|c|c|c|}
    \hline
    \multirow{2}{*}{VSC} & \multicolumn{3}{c|}{Power [MW]} & \multicolumn{3}{c|}{Control gains [pu]} \\
    \cline{2-7} 
    \rule{0pt}{2.1ex}
    & $P^\star$ & $P_\ell$ & $P_u$ & $m_p$ & $k_P$ & $k_I$ \\
    \hline
    1 & 25 MW & 20 MW & 110 MW & 4.17\% & 0.0048 & 0.0637 \\
    \hline
    2 & 87.5 MW & 20 MW & 110 MW & 9.38\% & 0.0048 & 0.0637 \\
    \hline
    3 & 55 MW & 20 MW & 110 MW & 6\% & 0.0048 & 0.0637 \\
    \hline
    \end{tabular}
    }
\end{table}

In addition to the base load, the dynamic load shown in Figure~\ref{fig:9bus_system} is introduced to create overload conditions for VSCs. The results of the EMT simulation are compared to the reduced-order model \eqref{eq:plimdroopprime}. 
We note that the reduced-order model \eqref{eq:plimdroopprime} uses a Kron-reduced~\cite{DB2013} network model with three buses and assumes decoupling of active power and frequency from reactive power and voltage. Moreover, the reduced-order model neglects fast dynamics (i.e., inner control loops, circuit dynamics, transmission line dynamics).

\subsection{Simulation Results and Discussion}
Simulation results are shown in Figure~\ref{fig:freqresponse}. EMT simulation results are shown in the top row and simulation results obtained using the reduced-order model \eqref{eq:plimdroopprime} are shown in the bottom row. Finally, the pink markers in the bottom row indicate the frequency deviation predicted by Theorem~\ref{th:syncfreq}.

From $t=5~\mathrm{s}$ to $t=35~\mathrm{s}$, the system is in steady-state with a frequency deviation because the total load is below the total dispatch, i.e., $\sum  P_{L,i} < \sum P^\star_i$. After $t=35~\mathrm{s}$, the dynamic load increases stepwise such that $\sum P_{L,i} > \sum P^\star_i$. Therefore, from $t=35~\mathrm{s}$ to $t=65~\mathrm{s}$ load increments result in a decrease in frequency. Notably, for $t < 65~\mathrm{s}$, the total load is below the sum of the upper power limits and the power-limiting does not activate. Thus, conventional power-sharing occurs as prescribed by the droop coefficients. After applying the load increment at $t=65~\mathrm{s}$, VSC 2 becomes overloaded and the additional load is shared between the two VSCs operating below their upper power limit indicated by the green line in Figure~\ref{fig:freqresponse}. It is worthwhile to note that, in this setting, VSC 2 is more prone to overload because its power setpoint $P^\star_2$ is close to its upper limit $P_{u,2}$. Similarly, the consecutive subsequent load increases overload VSC 3 and VSC 1, respectively. We note that the reduced-order model captures the average transient responses of the EMT model as illustrated by the zoomed in frequency and power at $t = 125~\mathrm{s}$ and power at $t=95~\mathrm{s}$. 

Moreover, to validate the analytical results for the synchronous frequency $\omega_s$, the frequency deviation predicted by Theorem~\ref{th:syncfreq} for the aforementioned load scenarios have been computed and are indicated using pink markers in Figure~\ref{fig:freqresponse}. We observe that the  analytical results closely match the EMT simulation and results, and obtained using the reduced-order model \eqref{eq:plimdroopprime}.

Finally, the last column in Figure~\ref{fig:freqresponse} shows the primal variables in edge coordinates for both simulations. The results show that the EMT simulation and reduced-order model \eqref{eq:plimdroopprime} exhibit the same qualitative response upon frequency synchronization and convergence. At the same time, we observe that the angle differences are smaller in the reduced-order model as compared to the EMT simulation. These differences arise from line neglecting losses and the linearization of ac power flow equation at nominal voltage magnitude and zero angle differences. In contrast, the frequency and the power injection of the VSCs closely match between the EMT simulation, reduced-order model, and analytical results.

\section{Conclusions and Outlook}\label{sec:conclusion}
In this paper, we studied a constrained network flow problem that aims to minimize the deviation of nodal network injections from given references under injection limits. Applying standard primal-dual dynamics to the constrained flow problem results in dynamics that cannot be implemented using only local measurements. Instead, we investigated networked dynamics that leverage measurements of the network injections to solve the constrained flow problem. First, we showed that the networked dynamics are equivalent to primal-dual dynamics of the constrained flow problem in edge coordinates. Then, we established that the networked dynamics are asymptotically stable with respect to the set of the KKT points of the constrained flow problem in the original nodal coordinates. Leveraging our theoretical results, we showed that, under mild assumptions, the frequency dynamics of power network comprised of power converters using power-limiting droop control are globally asymptotically stable with respect to set of KKT points of a constrained dc power flow problem. Furthermore, we established that (i) the converters synchronize to a common synchronous frequency, and (ii) exhibit power-sharing properties similar to conventional unconstrained droop control. Specifically, we analyzed the impact of power limits on the synchronous steady-state frequency of the power system and characterized the synchronous frequency in terms of total load, sum of converter power setpoints, and converter droop coefficients. Particularly, converters share additional load according to their droop coefficients up to their power limit. While these results are encouraging, future work should consider a wider range of constraints. In the power system context, constraints beyond active power limits (e.g., current limits, dc voltage limits) are seen as interesting topics for future work.

\bibliographystyle{IEEEtran}
\bibliography{IEEEabrv,Amirhossein}

\end{document}